\theoremstyle{plain}
\newtheorem{theorem}{Theorem}
\newtheorem{proposition}[theorem]{Proposition}
\newtheorem{lemma}[theorem]{Lemma}
\theoremstyle{definition}
\newtheorem{example}[theorem]{Example}
\newcommand{\lfam}{\mathscr{L}}
\newcommand{\dpwk}{\textrm{DPWK}}
\newcommand{\dpwkm}[1]{\dpwk(#1)}
\newcommand{\revpwk}{\textrm{REV-PWK}}
\newcommand{\revpwkm}[1]{\revpwk(#1)}
\newcommand{\invdelta}{\delta^{\scriptscriptstyle\leftarrow}}
\newcommand{\invmu}{\mu^{\scriptscriptstyle\leftarrow}}
\newcommand{\invvdash}{\mathrel{\vdash^{\protect\raisebox{2pt}{$\scriptscriptstyle\leftarrow$}}}}
\newcommand{\dfa}{\textrm{DFA}}
\newcommand{\lba}{\textrm{LBA}}
\newcommand{\ora}{\overrightarrow}
\newcommand{\ula}{\underleftarrow}
\newcommand{\valc}{\textrm{VALC}}
\newcommand{\leftend}{\mathord{\vartriangleright}}
\newcommand{\rightend}{\mathord{\vartriangleleft}}
\newcommand{\dollar}{\texttt{\$}}
\newcommand{\subtext}[1]{\textnormal{\scriptsize #1}}
\newcommand{\cyes}{\textcolor{PineGreen}{\ding{51}}}%
\newcommand{\cno}{\textcolor{red}{\ding{55}}}%
\newcommand{\eoe}{\ifmmode$\hspace*{\fill}$\blacksquare\else\hspace*{\fill}$\blacksquare$\fi\smallskip}
\title{Reversible Two-Party Computations}
\author{Martin Kutrib and Andreas Malcher
\institute{%
  Institut f\"ur Informatik, Universit\"at Giessen\\
  Arndtstr.~2, 35392 Giessen, Germany}
\email{$\{$kutrib,andreas.malcher$\}$@informatik.uni-giessen.de}
}
\begin{document}

\maketitle

\begin{abstract}
Deterministic synchronous systems consisting of two finite automata
running in opposite directions on a shared read-only input are
studied with respect to their ability 
to perform reversible computations, which means that the automata
are also backward deterministic and, thus, are able to uniquely
step the computation back and forth.
We study the computational capacity of such devices and obtain
on the one hand that
there are regular languages that cannot be accepted by
such systems. On the other hand, such systems can accept even
non-semilinear languages. 
Since the systems communicate by sending messages, we consider
also systems where the number of messages sent during a
computation is restricted. We obtain a finite hierarchy with
respect to the allowed amount of communication inside the
reversible classes and separations to general, not
necessarily reversible, classes. Finally, we study closure
properties and decidability questions and obtain that the
questions of emptiness, finiteness, inclusion, and equivalence
are not semidecidable if a superlogarithmic
amount of communication is allowed.
\end{abstract}

\section{Introduction}

Watson-Crick automata have been introduced 
in~\cite{Freund:1997:wcfa:proc} as a formal model for 
DNA computing. The motivation for such automata comes 
from processes observed in nature and laboratories. 
Basically, the idea is to consider finite automata with 
two reading heads that run on either strand of a double 
stranded DNA-molecule. It is noted 
in~\cite{nagy:2007:ofpt3pswcfa:proc} that in nature enzymes 
moving along DNA strands may obey the biochemical direction 
of the single strands of the DNA sequence.
Hence, so-called $5'\to 3'$ Watson-Crick automata
have been introduced in~\cite{nagy:2007:ofpt3pswcfa:proc},
which are two-head finite automata where the heads start at
opposite ends of a strand and move in opposite physical directions.
It is known that no additional information is encoded in
the second strand provided that the complementarity relation 
of the double stranded sequence is one-to-one. In this case,
$5'\to 3'$ Watson-Crick automata share a common input sequence.

Watson-Crick automata and $5'\to 3'$ Watson-Crick automata
have intensively been investigated in the last years from
different points of view. Descriptional complexity aspects
of Watson-Crick automata are studied in~\cite{czeizler:2009:odcwca}.
$5'\to 3'$ Watson-Crick automata with several runs, which means
that both heads are sweeping between both ends of the input,
are investigated in~\cite{leupold:2010:ftwcsr} and a hierarchy 
with respect to the number of runs has been obtained. 
The aspect of the amount of communication between the two heads
that is necessary in accepting computations is highlighted
in~\cite{kutrib:2010:tpwcc} where $5'\to 3'$ Watson-Crick automata
with restricted communication are introduced and a finite
hierarchy concerning the amount of communication could be obtained.
The concept of sensing heads, where one head can sense the
presence of the other head, has been applied to 
$5'\to 3'$ Watson-Crick automata in~\cite{nagy:2013:h5s3sswcfal,nagy:2017:answkac}.
The concept of jumping automata, where the input is processed
in a discontinuous way, has been introduced and investigated
for $5'\to 3'$ Watson-Crick automata in~\cite{kocman:2022:ajwkfam}.
Finally, the impact of replacing
the underlying devices of finite automata by finite transducers
or pushdown automata is studied in~\cite{nagy:2021:od1lswkfst} 
and in~\cite{chatterjee:2017:wkpa,nagy:2020:wkpa}, respectively. 

Another line of research in recent years is the study of
reversible devices. Here, a computation is considered
\emph{reversible} if every configuration
has at most one unique successor configuration and 
at most one unique predecessor configuration.
The study of such devices that perform logically reversible
computations is motivated by Landauer's question of whether
logical irreversibility is an unavoidable feature
of useful computers. This question is of particular interest,
since Landauer has demonstrated that whenever a physical computer
throws away information about its previous state it must generate 
a corresponding amount of entropy that results in heat dissipation.
A detailed discussion and suitable references can be found
in~\cite{Bennet:1973:lrc}. 
Reversible variants of many computational models have been
studied in the literature. 
For Turing machines the first investigations on reversible
computations date back to the sixties of the last century.
It is shown in the work of Lecerf~\cite{lecerf:lmmtr:1963}
and Bennett~\cite{Bennet:1973:lrc} that it is possible
for every Turing machine to construct an equivalent 
reversible Turing machine. Hence, every irreversible
computation can be made reversible. This is no longer true
if finite automata are considered. 
On the one hand, it is known that reversible one-way deterministic finite automata
are computationally weaker than one-way deterministic finite automata in
general~\cite{Angluin:1982:irl} (cf. also~\cite{holzer:2018:mrdfa}). On the other hand, two-way 
deterministic finite automata and reversible two-way 
deterministic finite automata are equally powerful~\cite{kondacs:1997:pqfsa}. 
Similar results are known for multihead finite 
automata. In case of one-way motion, the reversible variant
is computationally weaker than the general model
(\cite{kutrib:2017:owrmhfa}), whereas
in case of two-way motion the computational power of the
reversible variant and the general model 
coincides~\cite{morita:2011:twrmhfa}.
Several more types of devices as, for example, queue
automata~\cite{kutrib:2016:rqa}, one-way counter machines
with multiple counters~\cite{kutrib:2022:rcoowca},
and parallel communicating finite automata~\cite{bordihn:2021:rpcfas}
have been investigated with respect to reversibility.
An overview of the topic is given in~\cite{kutrib:2014:arca}.

The aspect of reversibility has been studied for
Watson-Crick automata in~\cite{chatterjee:2017:rwka}.
One result is that every regular language can be accepted
by a reversible Watson-Crick automaton. Here, it is essential
that the complementarity relation 
of the double stranded sequence is \emph{not} one-to-one.
If the complementarity relation is one-to-one, another
result of~\cite{chatterjee:2017:rwka} gives that the computational 
power of reversible Watson-Crick automata and reversible
two-head finite automata (\cite{kutrib:2017:owrmhfa}) coincides.
In this paper, we study $5'\to 3'$ Watson-Crick automata
having a one-to-one complementarity relation and to
differentiate the notation from other variants we will
call the devices in question \emph{two-party Watson-Crick systems}.
This paper can be seen as a continuation of~\cite{kutrib:2010:tpwcc}
where communication restricted two-party Watson-Crick systems
are introduced and a strict four-level hierarchy depending on the
number of messages sent was established, where the levels are given
by $O(1)$, $O(\log(n))$, $O(\sqrt{n})$, and $O(n)$ messages allowed.
Moreover, it could be shown that the questions of emptiness,
finiteness, inclusion, and equivalence are not semidecidable, 
that is, not recursively enumerable, even if the communication 
is reduced to a limit $O(\log(n)\cdot\log(\log(n)))$. 
Here, we complement these results. 
After defining the model and giving two illustrating 
examples in Section~\ref{sec:prelim} we show 
in Section~\ref{sec:rev-versus-irrev} that there are
regular languages which can not be accepted by any
reversible two-party Watson-Crick systems with any amount
of communication. This is in strong contrast to general
two-party Watson-Crick systems where no communication is
necessary to accept regular languages. This result can be
used in Section~\ref{sec:closures} in which closure properties are investigated.
It turns out that reversible two-party Watson-Crick systems 
are closed under complementation and reversal, whereas they
are not closed under union, intersection, intersection with
regular languages, concatenation, iteration, length-preserving
homomorphism, and inverse homomorphism. 
In Section~\ref{sec:restr-comm}, we can extend the 
strict four-level hierarchy depending on the
number of messages sent from~\cite{kutrib:2010:tpwcc} to
reversible two-party Watson-Crick systems. Moreover, we obtain
that for every level the reversible systems are computationally weaker than
the general systems. Finally, we discuss in Section~\ref{sec:deci}
decidability questions. In a first step, we show that the questions
of emptiness, finiteness, inclusion, and equivalence are not
semidecidable for reversible two-party Watson-Crick systems
essentially disregarding the number of messages communicated.
In a second step, we refine the argumentation and apply and
adapt a result from~\cite{kutrib:2017:owrmhfa} which enables
us to show that the questions
of emptiness, finiteness, inclusion, and equivalence are not
semidecidable for reversible two-party Watson-Crick systems
even if the number of messages allowed is bounded by
$O(\log(n)\cdot\log(\log(n)))$.

\section{Definitions and Preliminaries}\label{sec:prelim}

We denote the set of nonnegative integers by $\mathbb{N}$. 
We write $\Sigma^*$ for the set of all words over the finite alphabet $\Sigma$.
The empty word is denoted by~$\lambda$, and 
$\Sigma^+ = \Sigma^* \setminus \{\lambda\}$. The reversal of a word $w$ is
denoted by $w^R$ and for the length of~$w$ we write~$|w|$. 
We use $\subseteq$ for inclusions and~$\subset$ for strict inclusions. 

A two-party Watson-Crick system is a device of two finite automata 
working independently and in opposite directions on a common read-only input
data. The automata communicate by broadcasting messages. 
The transition function of a single automaton depends on its current state,
the currently scanned input symbol, and the message currently received from 
the other automaton. Both automata work synchronously and the
messages are delivered instantly.
Whenever the transition function of (at least) one of the 
single automata is undefined the whole systems halts. 
The input is accepted if at least one of the automata is in an 
accepting state. 
A formal definition is as follows.

\begin{sloppypar}
A \emph{deterministic two-party Watson-Crick system} $(\dpwk)$ 
is a construct 
\mbox{$\mathcal{A}=\langle \Sigma,M, \leftend, \rightend, A_1, A_2\rangle$,} 
where 
$\Sigma$ is the finite set of \emph{input symbols},~$M$ is the set of 
possible \emph{messages},
$\leftend\notin\Sigma$ and $\rightend\notin \Sigma$ 
are the \emph{left and right endmarkers}, and
each $A_i=\langle Q_i, \Sigma, \delta_i, \mu_i, q_{0,i}, F_i\rangle$, $i\in\{1,2\}$, 
is basically a \emph{deterministic finite automaton} with \emph{state set} $Q_i$, 
\emph{initial state} $q_{0,i}\in Q_i$, and set of
\emph{accepting states} $F_i\subseteq Q_i$. Additionally, each $A_i$ has a 
\emph{broadcast function} 
$\mu_i: Q_i\times (\Sigma\cup\{\leftend,\rightend\}) \to M\cup\{\bot\}$
which determines the message \emph{to be sent}, where 
$\bot\notin M$ means \emph{nothing to send}, and a
\emph{(partial) transition function} 
\mbox{$\delta_i: Q_i \times (\Sigma \cup \{\leftend,\rightend\})\times
(M\cup\{\bot\}) \to Q_i \times \{0,+\}$,}
where~$+$ means to move the head one
square and~$0$ means to keep the head on the current square.
\end{sloppypar}

The automata $A_1$ and $A_2$ are called \emph{components} of the
system $\mathcal{A}$, where the so-called \emph{upper} component~$A_1$
starts at the left end of the input and
moves from left to right, and the \emph{lower} component $A_2$ starts at the right 
end of the input and moves from right to left.
A \emph{configuration} of $\mathcal{A}$ is represented by a string 
$\leftend v_1 \ora{p} x v_2 y\ula{q} v_3 \rightend$,
where $v_1xv_2yv_3$ is the input and it is understood that
component $A_1$ is in state~$p$ with its head scanning symbol~$x$, 
and component $A_2$ is in state $q$ with its head scanning symbol $y$.
System $\mathcal{A}$ starts with component~$A_1$ in its initial state
scanning the left endmarker and component $A_2$ in its initial state scanning
the right endmarker. 
So, for input $w\in \Sigma^*$, the initial configuration is
$\ora{q_{0,1}}\leftend w \rightend\ula{q_{0,2}}$.
A computation of $\mathcal{A}$ is a sequence of configurations
beginning with an initial configuration.
One step from a configuration to its successor
configuration is denoted by~$\vdash$.
Let $w= a_1 a_2\cdots a_n$ be the input, $a_0= \leftend$, and $a_{n+1}=
\rightend$, then we set
$$
a_0 \cdots a_{i-1} \ora{p}a_i \cdots a_j \ula{q} a_{j+1}\cdots a_{n+1}
\vdash
a_0 \cdots a_{i'-1} \ora{p_1}a_{i'} \cdots a_{j'} \ula{q_1} a_{j'+1}\cdots a_{n+1}
,$$
for $0\leq i,j \leq n+1$,
iff
$\delta_1(p,a_i,\mu_2(q,a_j)) = (p_1,d_1)$ and 
$\delta_2(q,a_j,\mu_1(p,a_i)) = (q_1,d_2)$,
$i'=i+d_1$ and $j'=j-d_2$.
As usual we define the
reflexive, transitive closure of $\vdash$ by $\vdash^*$.  

A computation \emph{halts} when the successor configuration is 
not defined for the current configuration. This may happen
when the transition function of one component is not defined.
The language $L(\mathcal{A})$ accepted by a $\dpwk$ $\mathcal{A}$
is the set of inputs $w\in \Sigma^*$ such that there is 
some computation beginning with the initial configuration for $w$ 
and halting with at least one component being in an accepting state.

Now we turn to \emph{reversible} two-party Watson-Crick systems.
Basically, reversibility is meant with respect to the possibility of 
stepping the computation back and forth. 
So, the system has also to be backward deterministic.
That is, any configuration must have at most one 
predecessor which, in addition, is computable by a two-party Watson-Crick system.
In particular for the read-only input tape, the 
machines reread the input symbol which they have been read in a preceding 
forward computation step. 
Therefore, for reverse computation steps 
the head of the upper component is either moved to the \emph{left} or 
stays stationary, whereas the head of the lower component is either moved to the
\emph{right} or stays stationary. One can imagine that in a forward step, 
first the input symbol is read and then the input head is moved to its new
position, whereas in a backward step, first the input head is moved to its new
position and then the input symbol is read.

So, a deterministic two-party Watson-Crick system $\mathcal{A}$ is said to be \emph{reversible} ($\revpwk$) if  
and only if there exist \emph{reverse transition functions}
\mbox{$\invdelta_i: Q_i \times (\Sigma \cup \{\leftend,\rightend\})\times
(M\cup\{\bot\}) \to Q_i \times \{0,-\}$} and
\emph{reverse broadcast functions}
\mbox{$\invmu_i: Q_i\times (\Sigma\cup\{\leftend,\rightend\}) \to M\cup\{\bot\}$}
inducing a relation~$\invvdash$ from a
configuration to its \emph{predecessor configuration}, such that  
$$a_0 \cdots a_{i'-1} \ora{p_1}a_{i'} \cdots a_{j'} \ula{q_1} a_{j'+1}\cdots
a_{n+1} \invvdash a_0 \cdots a_{i-1} \ora{p}a_i \cdots a_j \ula{q}
a_{j+1}\cdots a_{n+1}
$$
\centerline{if and only if}
$$a_0 \cdots a_{i-1} \ora{p}a_i \cdots a_j \ula{q} a_{j+1}\cdots a_{n+1}
\vdash
a_0 \cdots a_{i'-1} \ora{p_1}a_{i'} \cdots a_{j'} \ula{q_1} a_{j'+1}\cdots
a_{n+1}.
$$

In the following, we study the impact of communication in 
deterministic two-party Watson-Crick systems.
The communication is measured by the total number of messages sent 
during a computation, where it is understood that $\bot$
means no message and, thus, is not counted. 

Let $f:\mathbb{N}\to\mathbb{N}$ be a mapping. 
If all $w\in L(\mathcal{A})$
are accepted with computations where the total number of
messages sent is bounded by $f(|w|)$,
then $\mathcal{A}$ is said to be \emph{communication bounded by~$f$.}
We denote the class of $\dpwk$s that are communication bounded by 
$f$ by $\dpwkm{f}$. In case of reversible $\dpwk$s we have to
consider the number of messages sent in reverse computations as well.
If all $w\in L(\mathcal{A})$
are accepted with computations where the total number of
messages sent in forward computations and in reverse computations
is each bounded by $f(|w|)$,
then $\mathcal{A}$ is said to be \emph{communication bounded by~$f$}
and the corresponding class of $\revpwk$s is denoted by $\revpwkm{f}$.

In general, the \emph{family of languages accepted} by devices of type~$X$
is denoted by~$\lfam(X)$.
To illustrate the definitions we start with two examples.

\begin{example}\label{exa:anbn}
The non-regular language 
$L=\{\, a^n b^n \mid n\ge 1\,\}$ is 
accepted by a $\revpwk$.
The principal idea of the construction is that the upper component
starts with one time step delay and then moves its head with
maximum speed to the right, whereas the lower component immediately
starts to move its head with maximum speed to the left. Both
components communicate in every time step the symbol they read.
When the lower component has read the rightmost $a$ of the $a$-block
after having passed the $b$-block, the transition functions ensure
that the upper component has to read the leftmost $b$ of the 
$b$-block after having passed the $a$-block. When the lower
component has reached the left endmarker, it waits for one time
step. To accept the input, the upper component has to read the
right endmarker in the final step. In the backward computation
the upper component immediately starts, whereas the lower component
starts with with one time step delay. 
When the upper component has read the rightmost $a$ of the $a$-block
after having passed the $b$-block, the transition functions ensure
that the lower component has to read the leftmost $b$ of the 
$b$-block after having passed the $a$-block. Finally, when the 
upper component has reached the right endmarker, it waits for one
time step. To reach the initial configuration the 
lower component has to read the left endmarker in the next
time step. 

\begin{sloppypar}
For the precise construction of a $\revpwk$ accepting 
the language 
$L=\{\, a^nb^n \mid n \ge 1 \,\}$
we define 
$\mathcal{A}=\langle \{a,b\}, \{a,b,\leftend,\rightend\}, \leftend,\rightend, A_1, A_2 \rangle$ where
$$A_1=\langle \{p_0,p_1, \ldots, p_5\}, \{a,b\}, \delta_1, \mu_1, p_0, \{p_5\} \rangle \mbox{ and }
A_2=\langle \{q_0,q_1, \ldots, q_5\}, \{a,b\}, \delta_2, \mu_2, q_0, \{\} \rangle.$$
The broadcast functions $\mu_1, \mu_2$ and the reverse broadcast functions $\invmu_1,\invmu_2$ are 
defined as \mbox{$\mu_1(p,x)=\invmu_1(p,x) = x$} and $\mu_2(q,x)=\invmu_2(q,x) = x$ 
for all $p \in \{p_0,p_1, \ldots, p_5\}$, $q \in \{q_0,q_1, \ldots, q_5\}$,
and 
\mbox{$x \in \{a,b,\leftend,\rightend\}$.}
The transition functions $\delta_1, \delta_2$ and $\invdelta_1,\invdelta_2$
are as follows.
\end{sloppypar}

\begin{center}
\renewcommand{\arraystretch}{1.1}
\begin{tabular}[t]{rccc}
\hline
\multicolumn{4}{c}{${A_1}$ forward}\\
\hline
(1) &  $\delta_1(p_0,\leftend,\rightend)$ &=& $(p_1, 0)$\\
(2) &  $\delta_1(p_1,\leftend,b)$ &=& $(p_2, +)$\\
(3) &  $\delta_1(p_2,a,b)$ &=& $(p_2, +)$\\
(4) &  $\delta_1(p_2,a,a)$ &=& $(p_3, +)$\\
(5) &  $\delta_1(p_3,b,a)$ &=& $(p_3, +)$\\
(6) &  $\delta_1(p_3,b,\leftend)$ &=& $(p_4, +)$\\
(7) &  $\delta_1(p_4,\rightend,\leftend)$ &=& $(p_5, 0)$\\
\hline
\end{tabular}
\quad
\begin{tabular}[t]{rccc}
\hline
\multicolumn{4}{c}{$A_1$ backward}\\
\hline
(1) &  $\invdelta_1(p_1,\leftend, \rightend)$ &=& $(p_0, 0)$\\
(2) &  $\invdelta_1(p_2,\leftend,b)$ &=& $(p_1, -)$\\
(3) &  $\invdelta_1(p_2,a,b)$ &=& $(p_2, -)$\\
(4) &  $\invdelta_1(p_3,a,a)$ &=& $(p_2, -)$\\
(5) &  $\invdelta_1(p_3,b,a)$ &=& $(p_3, -)$\\
(6) &  $\invdelta_1(p_4,b,\rightend)$ &=& $(p_3, -)$\\
(7) &  $\invdelta_1(p_5,\rightend,\leftend)$ &=& $(p_4, 0)$\\
\hline
\end{tabular}

\medskip
\renewcommand{\arraystretch}{1.1}
\begin{tabular}[t]{rccc}
\hline
\multicolumn{4}{c}{${A_2}$ forward}\\
\hline
(1) &  $\delta_2(q_0,\rightend,\leftend)$ &=& $(q_1, +)$\\
(2) &  $\delta_2(q_1,b,\leftend)$ &=& $(q_2, +)$\\
(3) &  $\delta_2(q_2,b,a)$ &=& $(q_2, +)$\\
(4) &  $\delta_2(q_2,a,a)$ &=& $(q_3, +)$\\
(5) &  $\delta_2(q_3,a,b)$ &=& $(q_3, +)$\\
(6) &  $\delta_2(q_3,\leftend,b)$ &=& $(q_4, 0)$\\
(7) &  $\delta_2(q_4,\leftend,\rightend)$ &=& $(q_5, 0)$\\
\hline
\end{tabular}
\quad
\begin{tabular}[t]{rccc}
\hline
\multicolumn{4}{c}{$A_2$ backward}\\
\hline
(1) &  $\invdelta_2(q_1,\rightend, \leftend)$ &=& $(q_0, -)$\\
(2) &  $\invdelta_2(q_2,b,\leftend)$ &=& $(q_1, -)$\\
(3) &  $\invdelta_2(q_2,b,a)$ &=& $(q_2, -)$\\
(4) &  $\invdelta_2(q_3,a,a)$ &=& $(q_2, -)$\\
(5) &  $\invdelta_2(q_3,a,b)$ &=& $(q_3, -)$\\
(6) &  $\invdelta_2(q_4,\leftend,b)$ &=& $(q_3, 0)$\\
(7) &  $\invdelta_2(q_5,\leftend,\rightend)$ &=& $(q_4, 0)$\\
\hline
\end{tabular}
\end{center}

We note that it is shown in~\cite{kutrib:2012:rpda} that
$L=\{\, a^n b^n \mid n\ge 1\,\}$ is not accepted by any 
reversible pushdown automaton.
\eoe
\end{example}

\begin{example}\label{exa:gladkij}
The non-context-free language 
$L'=\{\, w \dollar w^R \dollar a^{|w|} \mid w \in \{a,b\}^* \,\}$ is 
accepted by a $\revpwk$. Here, the principal idea is that the
upper component waits at the left endmarker, while the lower
component moves across the $a$-block. Having reached the 
second~$\dollar$, both components move with maximum speed and
test the structure $w\dollar w^R$ by communicating in every
time step they read. If no error occurred, the upper component
moves to the second~$\dollar$, while the lower component waits
at the first~$\dollar$. Finally, both components move with 
maximum speed and test the length of $w$ equals the length
of the $a$-block. The moving of the components in the backward
computation is straightforward. 
\eoe
\end{example}

\section{Reversibility versus Irreversibility}\label{sec:rev-versus-irrev}

We now turn to the question of whether reversible two-party Watson-Crick systems
are weaker than irreversible ones or not;
it turns out that they are. In fact, there are languages
accepted by irreversible two-party Watson-Crick systems
that do not need any communication which cannot be accepted 
by any reversible two-party Watson-Crick system
regardless of the number of communications.
To show this claim, we will use regular witness languages. Let
$\Sigma\supseteq \{a,b\}$ be an alphabet and $I\subseteq \Sigma^*$
be regular such that $I=I^R$.
Then we define 
$L_I= \{\, a^{m_1} b v b a^{m_2}\mid m_1,m_2\geq 0, v\in b^* \text{ or }
v\in I\,\}$.
So, the words in $L_I$ have a nonempty prefix of $a$'s, followed
by a $b$, followed by a factor of $b$'s or a factor from $I$, 
followed by a $b$, followed by a nonempty suffix of~$a$'s.

\begin{theorem}\label{theo:reg-not}
Let $\Sigma\supseteq \{a,b\}$ and $I\subseteq \Sigma^*$.
Then language $L_I$ is not accepted by any $\revpwk$.
\end{theorem}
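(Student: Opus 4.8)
The plan is to derive a contradiction from the assumption that some $\revpwk$ $\mathcal{A}$ accepts $L_I$, by running $\mathcal{A}$ on inputs of the form $a^{m_1} b\, v\, b\, a^{m_2}$ and exploiting both forward and backward determinism to force a collision of configurations that would put a word outside $L_I$ into the language. The crucial feature of $L_I$ is that the ``middle'' factor $v$ may be an arbitrary word from $b^*$ \emph{or} from $I$, but the two regimes are glued together by the same outer skeleton $a^{m_1}b\cdots b a^{m_2}$; a reversible device, which cannot ``forget'' information, will be unable to keep the two regimes separate.

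First I would observe that, for fixed $m_1, m_2 \ge 1$, the accepting computation on $a^{m_1} b\, b^k\, b\, a^{m_2}$ (taking $v = b^k \in b^*$) is a halting computation; since $\mathcal{A}$ is both forward and backward deterministic, each component traces a deterministic path whose behaviour on the long $b$-region $b^{k}$ is eventually periodic in $k$ (the pair of states of the two heads, together with which head is where, can only take finitely many values, so on a sufficiently long block of identical symbols the joint behaviour must cycle, in both the forward and the reverse direction). The key structural step is then a cut-and-paste / pumping argument on this $b$-block: I want to produce a configuration $C$ that occurs in the accepting computation for some $a^{m_1} b\, b^k\, b\, a^{m_2}$ and also occurs — at the ``same physical place'' — in a computation on an input whose middle factor is \emph{not} in $b^* \cup I$, say a word $b^i u b^j$ with $u$ some short non-$b$, non-$I$ word. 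Because backward determinism forces the predecessor of $C$ to be unique and computable by $\mathcal{A}$, running the computation from $C$ \emph{backwards} reconstructs the initial configuration, and running it forwards reaches an accepting halt — so $\mathcal{A}$ accepts this illegal word, contradicting $L(\mathcal{A}) = L_I$.

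The engine behind ``$C$ occurs in both computations'' is that a reversible two-party system behaves, on a maximal block of identical input symbols bounded by markers on both sides, like a reversible mapping on a finite set of (joint-state) configurations; hence its action on such a block is a permutation, and in particular it can be shortened or lengthened while preserving the boundary behaviour and preserving reversibility. Concretely, I would compare the computation on $a^{m_1} b\, b^{k}\, b\, a^{m_2}$ with the computation on $a^{m_1} b\, b^{k'} c\, b^{k''}\, b\, a^{m_2}$ for a letter $c \notin \{a,b\}$ if $I \ne \{a,b\}^*$, or, if $I$ is all of $\{a,b\}^*$, with a middle factor that does violate the palindromic/length constraints that $I$ might impose — exploiting that the theorem is stated for \emph{every} $I$, so in particular for $I$ finite, where $L_I$ restricted to middle factors from $\{a,b\}^*$ is essentially just $b^*$ plus finitely many exceptions, and the set $\{a^{m_1} b\, w\, b\, a^{m_2} : w \in \{a,b\}^*\}$ is far from being a subset of $L_I$. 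The reversibility of $\mathcal{A}$ means the ``transfer behaviour'' across the first $b$ and the last $b$ (the gadgets separating the $a$-suffixes from the middle) is a fixed reversible function of the incoming joint state, so a configuration legitimately produced in the middle region by a long $b^k$ must also be producible by feeding in a different-but-equinumerous middle region, and the outer $a$-blocks cannot detect the difference.

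The main obstacle I anticipate is making precise the sense in which the two heads' joint behaviour over a block of identical symbols is ``eventually periodic / permutation-like'' when the two heads move \emph{toward} each other and may meet inside the block — unlike a single two-way head, here the reachable configurations depend on the relative timing of the two heads, and one must track the interface of the two trajectories carefully (which symbol each head reads at each step, and the single instant, if any, at which they pass). I would handle this by a standard crossing-sequence-style argument adapted to the two-headed synchronous setting: fix the left $a$-block and the first $b$; this determines a deterministic ``input stream with timestamps'' that the upper head feeds into the $b^k$-region, and symmetrically for the lower head from the right; because both components are finite-state and the middle symbols are all $b$, the pair of (state, time-parity-of-arrival) data is eventually periodic in the block length, and reversibility upgrades this to exact periodicity, giving the pumping lemma I need. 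Once that lemma is in hand, the contradiction is immediate from backward determinism as sketched above.
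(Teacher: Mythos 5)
There is a genuine gap at the heart of your plan: the step where you obtain a configuration $C$ that occurs both in the accepting computation on $a^{m_1}b\,b^k\,b\,a^{m_2}$ and in a computation on a word whose middle factor lies outside $b^*\cup I$. Your engine for producing $C$ is periodicity of the joint behaviour over a block of identical symbols, i.e., a pumping argument. But pumping a homogeneous block can only change its \emph{length}, never its \emph{content}, and $L_I$ is closed under all such length changes: lengthening or shortening the $a$-blocks or the $b$-block of $a^{m_1}b\,b^k\,b\,a^{m_2}$ by a period yields another word of $L_I$, so no contradiction can come out of it. To smuggle a non-$b$ symbol into the middle you would need the components to reach $C$ on a tape that actually differs in content from the legal one; since the transition functions depend on the symbol currently scanned, and since in any accepting computation the two heads must jointly visit every square (otherwise flipping an unvisited square already breaks correctness even for irreversible devices), nothing prevents the computation on the corrupted word from diverging the moment a head reads the foreign symbol. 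Reversibility gives you unique predecessors along a \emph{fixed} tape; it does not let you transport a configuration to a different tape. A further problem: the theorem must hold for \emph{every} $I$, including $I=\Sigma^*$ with $\Sigma=\{a,b\}$, in which case every middle factor over $\Sigma$ is legal and your target ``illegal middle word'' does not exist.

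The paper's proof takes a different and, in hindsight, necessary route: it works exclusively with the words $a^xb^ya^z$ (middle factor in $b^*$, so $I$ never enters the argument) and derives the contradiction directly from backward determinism rather than from acceptance of a word outside the language. Concretely, it merges the two components into one finite control, observes that on long blocks the control must run through state cycles, and shows by case analysis that every way of passing from the cycle on the $a$-prefix to a cycle on the $b$-block and then to a cycle on the $a$-suffix either forces two distinct reachable states to have the same successor on the same pair of scanned symbols (an outright reversibility violation), or reduces the remaining computation to a single reversible $\dfa$ that would have to accept $a^*b^*a^*$ or $b^*a^*$ (impossible), or forces the accepted language to be $\{a,b\}^+\neq L_I$. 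The part of your proposal worth keeping is the observation that reversibility upgrades the eventually-periodic loop structure to a permutation-like constraint; pushed through carefully at the block boundaries, that is exactly where the paper finds its forbidden pattern --- but the contradiction has to be an internal violation of backward determinism, not an accepted bad word.
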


\begin{proof}
Assume for the purpose of contradiction that $L_I$ is accepted by some 
$\revpwk$ $\mathcal{A}$.
Since we do not limit the number of possible communications, we
simply assume that both components send a message at every time
step. In this case, for the sake of easier writing, we can assume that there
is \emph{one common} finite-state control for both components. This control receives
a pair of input symbols in every step, changes the state, and moves the
components if required. Now we can argue that the system is irreversible if
there are two reachable states that have a common successor state for the same
pair of input symbols.

We denote this system $M$, its set of states $Q$, and its transition function $\delta$.
We now consider accepting computations on words $w=a^xb^ya^z \in L_I$, where $x,y,z$ are
long enough. 
In a first phase of such a computation, eventually at least one
component has to start to move across the $a$-prefix or $a$-suffix. Otherwise
the overall computation would loop forever. Since
$L_I=L^R_I$, we can safely assume that the upper
component moves. The lower component may move across the $a$-suffix or stay
stationary on the endmarker or some $a$. We choose $x$ and $z$ large enough
such that $M$ runs into a state cycle in this phase. Moreover, we
choose~$z$ that large that the upper component arrives at the first $b$ after
the $a$-prefix before the lower component has passed the $a$-suffix.
Let $p_1,p_2,\dots, p_k$ be the state cycle. We can adjust the length of the
prefix such that $M$ moves the upper component on the first $b$ while entering
state $p_k$. So, we have a configuration of the 
form 
$p_k\colon \leftend aa\cdots a \ora{b}b\cdots baa\cdots \ula{\sigma}\cdots$,
where the state of $M$ is written in front of $\leftend$, and $\sigma=a$ or 
$\sigma=\rightend$, and the components are scanning the 
symbols indicated by the arrows. Next, we can enlarge $z$ such that $M$ runs again in a 
state loop while the upper component is reading $b$'s and the lower component
is reading $\rightend$ or $a$'s. Assume that the sequence of states passed through is extended
from $p_k$ by $p'_1, p'_2, \dots, p'_i, p''_1,\dots p''_j, p''_1$. Then
we know $\delta(p'_i, (b,\sigma_1))=(p''_1,d_1,d_2)$ and 
$\delta(p''_j, (b,\sigma_2))=(p''_1,d_1,d_2)$, where $d_1,d_2$ indicate
whether the components are moved or not. Since~$M$ is reversible, we derive
$p'_1, p'_2, \dots, p'_i, p''_1,\dots p''_j, p''_1 = p_1,p_2,\dots, p_k,p_1$
or
$(b,\sigma_1)\neq (b,\sigma_2)$ and, thus, $\sigma_1\neq \sigma_2$ and, hence,
$\sigma_1=\rightend$ and $\sigma_2=a$. Dependent on whether the
loop on the $(a,\sigma)$'s is continued on the $(b,\sigma)$'s,
or the second possibility, we distinguish two
cases. A similar distinction will be made in several sub-cases. 

\textbf{Case A} The system $M$ continues to loop through the states
$p_1,p_2,\dots, p_k$ while reading $(b,\sigma)$'s. Recall that 
the current state determines the last movements of the components. Therefore,
the upper component moves across the~$b$'s. Moreover, we can choose $y$ and $z$
again large enough such that the upper component runs through several loops
and $M$ moves the upper component on the first~$a$ of the suffix while entering
state $p_k$. So, we have a configuration of the 
form 
$p_k\colon \leftend aa\cdots a bb\cdots b\ora{a}a \cdots a\ula{\sigma}\cdots$.
Now, we can repeat the argument from above and distinguish the two sub-cases
that $M$ continues to loop through the states $p_1,p_2,\dots, p_k,p_1$,
or 
$(a,\sigma_1)\neq (a,\sigma_2)$ and, thus, $\sigma_1\neq \sigma_2$ and, hence,
$\sigma_1=\rightend$ and $\sigma_2=a$.

\textbf{Case A.1}
The system $M$ continues to loop through the states
$p_1,p_2,\dots, p_k$ while reading $(a,\sigma)$'s. In this sub-case
the upper component may reach the right endmarker before the lower component
reaches the $b$ before the $a$-suffix. Then the remaining computation of $M$ 
is that of a finite automaton, that is, of the lower component. Since the
language $a^*b^*a^*$ is not accepted by any reversible $\dfa$, we obtain a 
contradiction.

Therefore, the upper component may reach the right endmarker not before the 
lower component reaches the $b$ before the $a$-suffix.
Now, again we can repeat the argument from above and distinguish the two sub-cases
that $M$ continues to loop through the states $p_1,p_2,\dots, p_k$
while moving the lower component
or $(a,\sigma_1)$ must not be equal to $(a,\sigma_2)$ which can be violated by
adjusting the value of $z$. In this way $\sigma_1=\sigma_2=a$, a
contradiction. If, however, $M$ continues to loop through the states 
$p_1,p_2,\dots, p_k$, by almost the same arguments as before
we can obtain a contradiction unless $M$ continues to loop through the states 
$p_1,p_2,\dots, p_k$ until the lower component has reached the left
endmarker. In this case, the language $\{a,b\}^+$ is accepted.

\begin{sloppypar}
\textbf{Case A.2}
The sequence of states passed through to reach the configuration
$p_k\colon \leftend aa\cdots a bb\cdots b\ora{a}a \cdots a\ula{\sigma}\cdots$
is extended
from state $p_k$ by the states \mbox{$q'_1, q'_2, \dots, q'_{i'}, q''_1,\dots q''_{j'}, q''_1$,} and
we have
$\delta(q'_{i'}, (a,\sigma_1))=(q''_1,d_1,d_2)$ and 
\mbox{$\delta(q''_{j'}, (a,\sigma_2))=(q''_1,d_1,d_2)$,} and therefore
$(a,\sigma_1)\neq (a,\sigma_2)$ which implies $\sigma_1=\rightend$ and 
$\sigma_2=a$.
\end{sloppypar}

Now, the upper component may or may not reach the right endmarker before the lower component
reaches the $b$ before the $a$-suffix. We obtain a contradiction almost
literally as in Case A.1.

\textbf{Case B}
The sequence of states passed through to reach the configuration
$\leftend aa\cdots a \ora{b}b\cdots baa\cdots \ula{\sigma}\cdots$
in state $p_k$ is extended
from $p_k$ by $p'_1,  \dots, p'_i, p''_1,\dots p''_j,
p''_1$. 
Then we have $\delta(p'_i, (b,\sigma_1))=(p''_1,d_1,d_2)$ and 
$\delta(p''_j, (b,\sigma_2))=(p''_1,d_1,d_2)$, and therefore,
$(b,\sigma_1)\neq (b,\sigma_2)$ which implies 
$\sigma_1=\rightend$ and $\sigma_2=a$.

\textbf{Case B.1}
If the upper component moves in the state cycle $p''_1,\dots p''_j$, 
then we can choose $z$ again large enough such that the upper component
reaches the first $a$ after the $b$-factor before the lower component
reaches the $b$ before the $a$-suffix.
So, a configuration $\leftend aa\cdots a bb\cdots b\ora{a}a \cdots \ula{a}\cdots$
is reached in some state from the cycle. We obtain a contradiction along the
argumentation as in Case A.1.

\textbf{Case B.2}
If the upper component does not move in the state cycle $p''_1,\dots p''_j$, 
then a configuration $\cdots \ora{b} b\cdots \ula{b}aa\cdots$
is reached in some state from the cycle.

Assume that from here the computation continues in the same state cycle
until the lower component has reached the left endmarker. Then the upper
component stays on the current input in this phase, and
the remaining computation of~$M$ is that of a finite automaton, that is, of
the upper component on its remaining input of the form $b^*a^*$, which
is not accepted by any reversible $\dfa$. So, we obtain a contradiction.

We conclude that the computation cannot continue in the same state cycle.
If it continues in some state cycle 
$q'_1, q'_2, \dots, q'_{i'}, q''_1,\dots q''_{j'}, q''_1$ 
while both components read $b$'s, then we have
$\delta(q'_{i'}, (b,b))=(q''_1,d_1,d_2)$ and 
$\delta(p''_{j'}, (b,b))=(p''_1,d_1,d_2)$ which violates the reversibility.

If the computation continues in some state cycle 
$q'_1, q'_2, \dots, q'_{i'}, q''_1,\dots q''_{j'}, q''_1$ 
after at least one component has passed across the $b$-factor, we obtain a similar
contradiction with input pairs $(a,b)$, $(b,a)$, or~$(a,a)$.

This concludes the case analysis. Since in any possible case a contradiction
is derived, the initial assumption that $L_I$ is accepted by some 
$\revpwk$ is wrong and the assertion follows.
\end{proof}

The result of Theorem~\ref{theo:reg-not} that there is a regular
language that is not accepted by any $\revpwk$ together with
Example~\ref{exa:anbn} showing that the non-regular language
$\{\, a^n b^n \mid n\ge 1\,\}$
is accepted by a $\revpwk$ proves that the class of
languages accepted by $\revpwk$ and the regular languages
are incomparable. Since $\{\, a^n b^n \mid n\ge 1\,\}$ is
a linear and real-time deterministic context-free language, 
we immediately obtain the incomparability to the linear context-free
languages as well as to the real-time deterministic context-free
languages. It is shown in~\cite{kutrib:2012:rpda} that every
regular language
can be accepted by a reversible pushdown automaton. 
Moreover, it is shown that the
language $\{\, a^n b^n \mid n\ge 1\,\}$ cannot be accepted by
any reversible pushdown automaton.
Hence,
the classes of languages accepted by $\revpwk$ and reversible
pushdown automata are incomparable as well. 
 
\section{Closure Properties}\label{sec:closures}

The goal of this section is to collect some closure properties
of the families $\lfam(\revpwk)$.
For this purpose, the regular languages $L_I$ can 
be used very well in several cases.
In particular, we consider 
Boolean operations (complementation, union, intersection) and AFL operations 
(union, intersection with regular languages, homomorphism, inverse homomorphism, 
concatenation, iteration). The positive closure under reversal is trivial. 
The results are summarized in Table~\ref{tab:closure} at the end of the section.

\begin{proposition}\label{prop:closure-compl}
The family $\lfam(\revpwk)$ is closed under complementation.
\end{proposition}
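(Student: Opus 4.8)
The plan is to simulate the given \revpwk{} $\mathcal{A}$ for $L$ by a new system $\mathcal{A}'$ that performs exactly the same computation but swaps the roles of accepting and non-accepting configurations. The central difficulty, and the reason this is not completely trivial, is that acceptance for a \dpwk{} is defined via \emph{halting} in an accepting state: a word is rejected either because the computation halts in a non-accepting configuration or because it loops forever. Plain complementation of the accepting-state sets would turn "halts non-accepting" into "halts accepting" but would leave the looping computations still looping (hence still rejecting), so the na\"ive construction fails. Thus the real work is to argue that in a \revpwk{} no accepting computation can loop, and more importantly that we can detect and handle the looping rejecting computations.

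First I would observe that reversibility gives us a strong structural handle: since $\invvdash$ is a partial function and $\vdash$ is a partial function, the computation graph is a disjoint union of simple paths and simple cycles. An initial configuration has no predecessor (the heads are on the endmarkers in the initial states, and a backward step would move the upper head left of $\leftend$), so every reachable configuration lies on a finite path starting at the initial configuration — it cannot lie on a cycle and cannot be reached by an infinite backward chain. Consequently, a computation of $\mathcal{A}$ on input $w$ either halts after finitely many steps, or it is eventually periodic only if it re-enters a configuration, which is impossible by injectivity of $\vdash$ along the path from the initial configuration. Hence \emph{every} computation of a \revpwk{} halts after a number of steps bounded by the number of configurations, which for fixed $w$ is finite (at most $|Q_1|\cdot|Q_2|\cdot(|w|+2)^2$). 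This removes the looping problem entirely: on every input, $\mathcal{A}$ halts.

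With that in hand, the construction is: take $\mathcal{A}' = \langle \Sigma, M, \leftend, \rightend, A_1', A_2'\rangle$ where $A_i'$ is identical to $A_i$ except that the set of accepting states is replaced. Since $\mathcal{A}$ accepts iff at least one component halts in an accepting state, and we want $\mathcal{A}'$ to accept iff neither component of $\mathcal{A}$ halts in an accepting state, I would instead run the original machine to its halting configuration and then append a short deterministic, reversible postamble that inspects the two halting states and enters a fresh accepting state of, say, $A_1'$ exactly when $\mathcal{A}$ would have rejected. Concretely, add to each $Q_i$ a few new "verdict" states reachable only from halting configurations of $\mathcal{A}$; the transition on the halting configuration (which is currently undefined) is extended to move into these verdict states, which do nothing further and are accepting in $\mathcal{A}'$ precisely on the complementary condition. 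Because $\mathcal{A}$ always halts, $\mathcal{A}'$ always reaches its verdict and $L(\mathcal{A}') = \Sigma^* \setminus L(\mathcal{A})$.

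The last thing to check is that $\mathcal{A}'$ is itself reversible. The original transitions inherit their reverse functions $\invdelta_i, \invmu_i$ unchanged. For the newly added transitions one must exhibit reverse transitions: each verdict state has a unique predecessor (the halting configuration from which it was entered), and distinct halting configurations of $\mathcal{A}$ lead to distinct configurations of $\mathcal{A}'$ because the postamble does not move the heads and merely records the pair of halting states, so backward determinism is immediate. I expect the main obstacle — really the only nontrivial point — to be the argument that reversibility forces every computation to halt; once that is established, the complementation construction and the verification of backward determinism for the postamble are routine bookkeeping on the finite-state controls.
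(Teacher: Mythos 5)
Your overall strategy (show that reversibility tames non-halting computations, then halt the simulation and issue a complementary verdict) is reasonable, but the lemma it rests on is false as stated. You argue that the initial configuration has no predecessor because a backward step "would move the upper head left of $\leftend$"; however, backward steps may also be \emph{stationary} (the reverse transition functions map into $Q_i\times\{0,-\}$), so a configuration with both heads still on the endmarkers is a perfectly legal predecessor of the initial configuration. Concretely, the $\revpwk$ with $\delta_1(q_{0,1},\leftend,\bot)=(q_{0,1},0)$ and $\delta_2(q_{0,2},\rightend,\bot)=(q_{0,2},0)$, together with the analogous reverse transitions, satisfies the reversibility definition yet loops forever on every input. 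Hence "every computation of a $\revpwk$ halts" is wrong, and your construction, which waits for a halting configuration before issuing its verdict, never fires on such machines.

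The gap is repairable, and your structural analysis almost gets there: if a configuration repeats, backward determinism propagates the repetition back to time $0$, so any non-halting computation is a cycle through the \emph{initial} configuration; since the upper head position is non-decreasing and the lower one non-increasing, both heads must then sit on the endmarkers forever. Such a computation never reads the input, so whether it occurs is a property of the machine alone, checkable by simulating at most $|Q_1|\cdot|Q_2|$ steps; in that case $L(\mathcal{A})=\emptyset$ and one outputs a trivial $\revpwk$ for $\Sigma^*$. In the remaining case every computation does halt and your postamble idea applies, but two further points are glossed over. First, the complement condition is the \emph{conjunction} "$p\notin F_1$ and $q\notin F_2$", so the verdict requires $A_1$ to learn whether $A_2$ halted accepting; this must be arranged by enlarging the message alphabet (e.g.\ tagging the messages of $A_2$ with an acceptance bit) without disturbing forward or backward determinism of the original transitions. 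Second, a halting configuration may have $\delta_1$ undefined while $\delta_2$ is defined; extending only the missing entry makes the lower component take its old step in the added transition, which must also be accounted for when verifying backward determinism of the postamble.
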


\begin{proposition}\label{prop:nonclosure-union}
The family $\lfam(\revpwk)$ is not closed under 
union, intersection, and intersection with regular languages.
\end{proposition}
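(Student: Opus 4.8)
The plan is to derive all three non-closures from Theorem~\ref{theo:reg-not} together with the closure under complementation of Proposition~\ref{prop:closure-compl}. Since $\lfam(\revpwk)$ is closed under complementation, De Morgan's laws make non-closure under union and non-closure under intersection equivalent, so it suffices to refute one of them. Throughout I take $I$ to be \emph{regular} -- in the extreme case $I=\emptyset$, so that $L_\emptyset=\{\,a^{m_1}b^k a^{m_2}\mid m_1,m_2\ge 0,\,k\ge 2\,\}$ is a plain regular language -- so that Theorem~\ref{theo:reg-not} supplies a regular language $L_I\notin\lfam(\revpwk)$.

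For non-closure under intersection with regular languages I first note that $\Sigma^*\in\lfam(\revpwk)$: take a system whose upper component has its initial state in $F_1$ and whose transition functions are nowhere defined; for every input it halts at once in the (accepting) initial configuration, and since the step relation~$\vdash$ is empty it is vacuously reversible. Then $\Sigma^*\cap L_I=L_I$, where $\Sigma^*\in\lfam(\revpwk)$ and $L_I$ is regular but, by Theorem~\ref{theo:reg-not}, not in $\lfam(\revpwk)$; hence $\lfam(\revpwk)$ is not closed under intersection with regular languages. (If a non-trivial witness is wanted, one may replace $\Sigma^*$ by a genuinely non-regular language of $\lfam(\revpwk)$ over an enlarged alphabet whose intersection with a suitable regular language cuts out precisely $L_I$, but this is not needed.)

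For the other two operations the plan is to exhibit, for a suitably chosen $I$, two languages $L_1,L_2\in\lfam(\revpwk)$ with $L_1\cup L_2=L_I$; complementation then yields $\overline{L_1},\overline{L_2}\in\lfam(\revpwk)$ with $\overline{L_1}\cap\overline{L_2}=\overline{L_1\cup L_2}=\overline{L_I}\notin\lfam(\revpwk)$, which simultaneously refutes closure under union and under intersection. The decomposition must destroy the left--right symmetry that drives the proof of Theorem~\ref{theo:reg-not}: into $L_1$ I would put those words of $L_I$ on which a length relation linking the $a$-prefix to the middle block can be checked ``from the left'', and into $L_2$ those on which the mirror relation holds ``from the right'', arranging (possibly over an enlarged alphabet, letting the words of $I$ carry a distinctive marker) that every word of $L_I$ satisfies at least one of the two relations.

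The main obstacle is the construction of the two reversible two-party Watson-Crick systems themselves, specifically moving a component across the block of $b$'s while keeping the computation backward deterministic: a component that merely reads the $b$'s in a looping state that is \emph{entered} on a $b$ is not backward deterministic, and this is exactly the feature that Theorem~\ref{theo:reg-not} shows cannot be circumvented for $L_I$ as a whole. The escape is the mechanism already used in Examples~\ref{exa:anbn} and~\ref{exa:gladkij}: the two components run at maximum speed and broadcast their currently scanned symbols, and a component enters its looping state \emph{one step early} -- while still reading the last symbol of the preceding block -- triggered by a change in the message it receives from the other component (cf.\ transition~(4) of $A_1$ in Example~\ref{exa:anbn}, which switches into the $b$-phase while still reading an $a$). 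The length relation separating $L_1$ from $L_2$ is precisely what makes such a synchronising signal available on one side. Checking that the resulting systems are deterministic \emph{and} backward deterministic, and that every accepted word of $L_i$ is handled within the prescribed communication bound (which is in fact automatic here), is the technical heart of the argument.
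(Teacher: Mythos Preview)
Your treatment of intersection with regular languages is correct and essentially identical to the paper's: $\Sigma^*\in\lfam(\revpwk)$ and $\Sigma^*\cap L_\emptyset=L_\emptyset\notin\lfam(\revpwk)$.

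For union and intersection, however, you have chosen the hard direction and left the decisive step open. You aim for a \emph{union} decomposition $L_1\cup L_2=L_I$ with $L_1,L_2\in\lfam(\revpwk)$; but any such $L_1,L_2$ are subsets of $L_I$ and therefore inherit the very $a^{*}b^{+}a^{*}$ shape that Theorem~\ref{theo:reg-not} shows is toxic for reversibility. Your proposed escape---imposing a length relation so that a synchronising message can flip the looping state ``one step early''---is only sketched: you do not name the relation, you do not verify that each half lies in $\lfam(\revpwk)$, and you yourself flag the crossing of the $b$-block as the ``main obstacle''. As it stands this is a plan with a genuine gap, not a proof.

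The paper sidesteps the difficulty entirely by decomposing in the other direction: it writes $L_I$ (for $I=b\{a,b\}^*b$) as an \emph{intersection} $L_1\cap L_2$ of two supersets, namely $L_1=\{a^m bb\,v\mid m\ge 0,\ v\in\{a,b\}^*\}$ and $L_2=L_1^R$. Membership in $L_1$ depends only on a prefix of the form $a^*bb$, so the upper component alone can check it and halt---no communication, no $b$-block to traverse, trivially reversible (see Figure~\ref{fig:triv-dfa}); $L_2\in\lfam(\revpwk)$ then follows from closure under reversal. Non-closure under intersection is immediate, and non-closure under union follows by De Morgan and Proposition~\ref{prop:closure-compl}. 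The moral: intersecting lets you use \emph{supersets} with one-sided, prefix/suffix-only structure, whereas your union approach forces you to work with \emph{subsets} that retain the problematic global shape.
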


\begin{proof}
Let $\Sigma=\{a,b\}$. For $I=\emptyset$, we consider the regular language 
$L_\emptyset=\{\, a^{m_1} bb^{m_3}b a^{m_2}\mid m_1,m_2,m_3\geq 0\,\}$.
By Theorem~\ref{theo:reg-not}, the regular language $L_\emptyset$ does not belong to
the family $\lfam(\revpwk)$. On the other hand, the language $\Sigma^*$ does
belong to the family. Since $\Sigma^* \cap L_\emptyset = L_\emptyset$, we obtain the
non-closure under intersection with regular languages.

The non-closure under intersection is witnessed by the languages, 
$L_1= \{\, a^{m} bb v \mid m\geq 0, v\in \{a,b\}^*\,\}$
and
$L_2= \{\, v bb a^{m}\mid m\geq 0, v\in \{a,b\}^*\,\}$.

We show that $L_1$ is accepted by some more or less trivial
$\revpwk$ without any communication as follows.

The lower component does nothing, that is, it loops in its non-accepting initial state
on the right endmarker.
The behavior of the upper component is depicted as a state graph in
Figure~\ref{fig:triv-dfa}.
If and only if the component has seen a correct prefix of the form $a^* bb$ it halts in
an accepting state (the rest of the input cannot affect the computation result
any more and, by definition, there is no need to read it).

\begin{figure}[!ht]
  \centering
  \includegraphics[scale=.9]{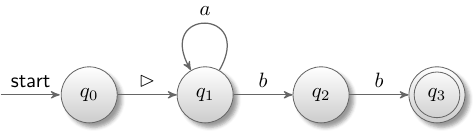} 
  \caption{State graph of the upper component of a $\revpwk$
  accepting $L_1$.}
  \label{fig:triv-dfa}
\end{figure}

Since $L_2=L_1^R$ and the closure of  $\lfam(\revpwk)$ under reversal, we
conclude that $L_2$ belongs to $\lfam(\revpwk)$ as well. However,
$L_1 \cap L_2 = L_I$ for $I= b\{a,b\}^*b$ and, thus, the non-closure under intersection 
follows.

The non-closure under union follows from the closure under complementation
and the non-closure under intersection by De Morgan's law. 
\end{proof}

\begin{proposition}\label{prop:nonclosure-catenation}
The family $\lfam(\revpwk)$ is not closed under 
concatenation and iteration.
\end{proposition}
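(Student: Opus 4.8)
The plan is to reuse the family of regular witness languages $L_I$ together with Theorem~\ref{theo:reg-not}, exactly in the spirit of the previous proposition. The idea is to exhibit two languages $K_1,K_2\in\lfam(\revpwk)$ whose concatenation $K_1K_2$ equals some $L_I$ (or contains $L_I$ as a ``sliced-off'' sublanguage that a delimiter makes visible), so that non-closure under concatenation follows from Theorem~\ref{theo:reg-not}. For this I would work over the alphabet $\{a,b,\dollar\}$ and put a single separating symbol $\dollar$ between the two factors, because concatenation of languages over disjoint-looking pieces is easiest to control when the cut point is marked. Concretely, I would take $K_1=\{\,a^{m_1}b u \dollar \mid m_1\ge 0,\ u\in b^* \text{ or } u\in I_1\,\}$ and $K_2=\{\,\dollar\text{-free continuation}\,\}$ so that $K_1K_2$ is forced to contain (a marked copy of) $L_I$; alternatively, and more cleanly, I would take $K_1=L_1'\dollar$ and $K_2=\dollar L_2'$ where $L_1'$ checks only the left half and $L_2'$ only the right half of an $L_I$-word, so that $K_1K_2$ intersected with the obvious regular set $a^*b\,(b^*{+}I)\,b\,a^*\dollar\dollar a^*b(\cdots)$ collapses to (an isomorphic image of) $L_I$ — but since $\lfam(\revpwk)$ is already known not to be closed under intersection with regular languages, I should instead arrange $K_1K_2$ to be \emph{exactly} $L_I$ (possibly with $\dollar$'s inside), which forces $K_1,K_2$ to be single-sided and hence $\revpwk$-acceptable by the same trivial one-component construction as in Figure~\ref{fig:triv-dfa}.

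The key steps, in order, are: (i) pick the precise $K_1$, $K_2$, and $I$ so that $K_1K_2$ is provably equal to $L_I$ — this requires checking that no ``misaligned'' factorization produces extra words, which is the delicate bookkeeping of the argument; (ii) show $K_1\in\lfam(\revpwk)$ by a construction where the upper component alone scans a prefix of the form $a^*b(b^*{+}I)b\cdots$ and halts accepting as soon as a valid prefix is confirmed (the lower component idling on the right endmarker, exactly as in the proof of Proposition~\ref{prop:nonclosure-union}), and verify reversibility of this one-sided automaton; (iii) show $K_2\in\lfam(\revpwk)$, either symmetrically or, if I have chosen $K_2=K_1^R$, by invoking the closure of $\lfam(\revpwk)$ under reversal; (iv) invoke Theorem~\ref{theo:reg-not} to conclude $L_I=K_1K_2\notin\lfam(\revpwk)$, giving non-closure under concatenation. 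For iteration, I would then find a single language $K\in\lfam(\revpwk)$ with $K^*$ (or $K^+$) equal to, or forced to contain in a delimited way, some $L_I$; a natural candidate is to take $K$ to consist of short ``tiles'' such as $\{a\}\cup\{b\}\cup\{bIb\}\cup\{\cdots\}$ so that $K^*$ reconstructs $L_I$, or to bootstrap from the concatenation witness by letting $K=K_1\cup K_2$ and arguing $K^*$ contains exactly the relevant $L_I$-words among words of the right shape.

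The main obstacle is step (i) together with its iteration analogue: making $K_1K_2$ (resp.\ $K^*$) be \emph{precisely} an $L_I$ — not merely a superset, and not merely equal after intersecting with a regular language — so that Theorem~\ref{theo:reg-not} applies directly. Concatenation is a ``generous'' operation: every way of splitting a word counts, so I must choose $K_1$ and $K_2$ so that the only legal split of any word in $K_1K_2$ is the intended one (this is what the separator $\dollar$, or an unambiguous prefix code structure on $a$'s and $b$'s, is there to guarantee). If a clean exact identity proves elusive, the fallback is to let $K_1$ and $K_2$ build a word of the form $\leftend$-style marked $L_I$ where the markers themselves are part of the alphabet, so that $K_1K_2$ is literally $\{\,w\mid w\in L_{I}\,\}$ over an enlarged alphabet and Theorem~\ref{theo:reg-not} (which is stated for arbitrary $\Sigma\supseteq\{a,b\}$ and arbitrary $I\subseteq\Sigma^*$) still applies verbatim. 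The remaining steps are routine: the one-sided $\revpwk$ constructions are essentially those already used in Proposition~\ref{prop:nonclosure-union}, and reversibility is immediate because such a machine is a deterministic finite automaton reading a prefix and then halting, which is trivially backward deterministic on the portion it reads.
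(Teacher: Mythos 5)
Your overall strategy --- decompose a regular non-member $L_I$ as a product $K_1K_2$ of members and invoke Theorem~\ref{theo:reg-not} --- hinges entirely on the step you yourself flag as unresolved, and that step does not close. Every language of the form $L_I$ contains, by definition, \emph{all} words $a^{m_1}bb^{m_3}ba^{m_2}$ (the ``$v\in b^*$'' alternative is hard-coded, and the proof of Theorem~\ref{theo:reg-not} lives entirely on inputs $a^xb^ya^z$). Hence a $\dollar$-marked product can never literally be an $L_I$, and your ``enlarged alphabet'' fallback does not help: the separator-free words must all be present, so on exactly those words the cut point is unmarked. Without the marker, the natural factors are of the shape $K_1=a^*bb^*$ and $K_2=b^*ba^*$, and these are \emph{not} ``prefix-detectable'' in the sense of Figure~\ref{fig:triv-dfa}: the one-sided trick from Proposition~\ref{prop:nonclosure-union} works only for languages of the form $P\Sigma^*$, where a component may halt accepting once $P$ is confirmed. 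For $a^*bb^*$ the entire input must be read, and then the classical obstruction for reversible one-way automata on $a^*b^+$ (two reachable live states with $b$-transitions into a common state) reappears; a cycle argument in the style of Theorem~\ref{theo:reg-not} indicates that $a^*bb^*$ is itself not in $\lfam(\revpwk)$. Your iteration candidate fails even more directly: any $K\supseteq\{a\}\cup\{b\}$ gives $K^*\supseteq\{a,b\}^*$, which collapses the witness to a trivially acceptable language. So steps (ii)--(iv) are not routine; they are exactly where the approach breaks.

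The paper argues in the opposite direction: it takes the \emph{non-regular} member $L=\{\,a^nb^n\mid n\ge 1\,\}$ of $\lfam(\revpwk)$ (Example~\ref{exa:anbn}) and shows that $L\cdot L$ and $L^*$ are not accepted by \emph{any} $\dpwk$, reversible or not. The reason is a head-crossing argument adapted from~\cite{leupold:2010:ftwcsr}: on inputs of the form $a^{n_1}b^{n_1}a^{n_2}b^{n_2}$ with both exponents large, the two components would have to reside simultaneously in each of the two pairs of corresponding blocks whose lengths must be compared, but two heads moving in opposite directions can achieve this for at most one such pair. If you want to rescue a concatenation witness built from $L_I$, you would first have to prove that your factors lie in $\lfam(\revpwk)$; the considerations above indicate they do not, so you should switch to a witness that is already known to be in the class, as the paper does.
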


\begin{proof}
The witness language for both operations is
$L=\{\, a^nb^n\mid n\geq 1\,\}$ which belongs
to $\lfam(\revpwk)$ by Example~\ref{exa:anbn}.

For the concatenation we consider $L\cdot L$ and
for the iteration we consider $L^*$.

Essentially, using a different but similar language, in
\cite{leupold:2010:ftwcsr} it is shown that for $n$ long enough
both components have to scan some symbol from each two factors whose
lengths have to be compared simultaneously. This argument applies also
here. However, the two components can simultaneously stay in two corresponding
factors at most for one such pair. This implies that neither the language $L\cdot L$
nor the language $L^*$ is accepted even by any not necessarily reversible $\dpwk$.
\end{proof}

\begin{proposition}\label{prop:nonclosure-lenpreshom}
The family $\lfam(\revpwk)$ is not closed under 
length-preserving homomorphisms.
\end{proposition}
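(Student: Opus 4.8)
The plan is to exhibit a language in $\lfam(\revpwk)$ whose image under some length-preserving homomorphism is one of the non-reversible witness languages $L_I$. The natural candidate is to take a variant of $a^nb^n$-style languages over an enlarged alphabet where the two components can verify structural correctness reversibly, and then collapse the extra alphabet symbols down to $\{a,b\}$ so that the collapsed language equals $L_I$ for some suitable $I$. Concretely, I would look for a language $\hat{L}$ over an alphabet $\Gamma\supsetneq\{a,b\}$ such that a length-preserving homomorphism $h\colon\Gamma^*\to\{a,b\}^*$ satisfies $h(\hat L)=L_\emptyset=\{\,a^{m_1}bb^{m_3}ba^{m_2}\mid m_1,m_2,m_3\ge 0\,\}$ (or some other $L_I$ known to be outside $\lfam(\revpwk)$ by Theorem~\ref{theo:reg-not}), while $\hat L$ is itself accepted by a $\revpwk$.

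First I would choose $\hat L$ so that the ``hard'' part of $L_\emptyset$, namely the requirement that the $a$-prefix and $a$-suffix are each nonempty and that the middle block is flanked by two marked $b$'s, is made explicit in the larger alphabet $\Gamma$. For instance, take $\Gamma=\{a,b,a',b'\}$ and let $\hat L$ consist of words of the form $a^{m_1}b'b^{m_3}b'a^{m_2}$ where now the flanking symbols are the distinguished $b'$, or even more liberally, put enough marked symbols so that the $\revpwk$ need not do any nontrivial comparison at all: it only has to check a regular pattern that happens to be \emph{reversibly} recognizable (unlike $a^*b^*a^*$). The homomorphism $h$ then simply maps $b'\mapsto b$, $a'\mapsto a$, identity on $a,b$. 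The point is that in $\hat L$ the boundaries are ``visible'', so a $\revpwk$ — indeed even a reversible one-way DFA read by the upper component alone — can accept $\hat L$ without the loop-merging phenomenon that drives the proof of Theorem~\ref{theo:reg-not}; yet $h(\hat L)=L_\emptyset$, which Theorem~\ref{theo:reg-not} rules out of $\lfam(\revpwk)$.

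The key steps, in order, are: (i) fix the enlarged alphabet $\Gamma$ and define $\hat L\subseteq\Gamma^*$ explicitly as a regular language whose ``structure'' is self-delimiting; (ii) define the length-preserving homomorphism $h$ and verify $h(\hat L)=L_I$ for the chosen $I$ (the cleanest choice is $I=\emptyset$, giving $L_\emptyset$); (iii) construct a $\revpwk$ accepting $\hat L$ — here I would have the lower component idle on the right endmarker in a non-accepting state while the upper component runs a reversible DFA checking the self-delimiting pattern, mirroring the construction in Figure~\ref{fig:triv-dfa}, and exhibit the reverse transition and broadcast functions explicitly (trivial since no communication occurs); (iv) invoke Theorem~\ref{theo:reg-not} to conclude $L_I\notin\lfam(\revpwk)$, contradicting closure under length-preserving homomorphism.

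The main obstacle is step~(iii) combined with step~(ii): I must ensure that the self-delimiting language $\hat L$ is genuinely accepted by a \emph{reversible} device — the subtlety in Theorem~\ref{theo:reg-not} is precisely that patterns like $a^*b^*a^*$ are \emph{not} reversibly recognizable even by two-party systems, so $\hat L$ must sidestep exactly that. The fix is to make the alphabet markers do the work: if every block boundary in $\hat L$ is signalled by a unique symbol that never recurs, then the upper component's DFA has a forward-deterministic \emph{and} backward-deterministic transition structure (each state has a unique predecessor for each input symbol), so reversibility is immediate. I would double-check that the chosen $h$ does not accidentally enlarge $h(\hat L)$ beyond $L_I$ — this is a routine verification once $\hat L$ and $h$ are pinned down — and that $I$ satisfies $I=I^R$ as required in the definition of $L_I$, which is automatic for $I=\emptyset$.
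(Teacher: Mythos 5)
Your proposal is correct and follows exactly the approach the paper's toolkit suggests: take a primed version $\hat L=\{\,a^{m_1}b'b^{m_3}b'a^{m_2}\mid m_1,m_2,m_3\geq 0\,\}$ of $L_\emptyset$, which is accepted by a $\revpwk$ whose upper component is a reversible DFA (the transitions on each of $a$, $b$, $b'$ are injective precisely because the block boundaries are marked) while the lower component idles as in Figure~\ref{fig:triv-dfa}, and collapse $b'\mapsto b$ to obtain $L_\emptyset\notin\lfam(\revpwk)$ by Theorem~\ref{theo:reg-not}. The verification that $h(\hat L)=L_\emptyset$ and that $I=\emptyset$ satisfies $I=I^R$ is routine, as you note, so the argument is complete.
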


\begin{proposition}\label{prop:nonclosure-invhom}
The family $\lfam(\revpwk)$ is not closed under 
inverse homomorphisms.
\end{proposition}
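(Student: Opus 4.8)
The plan is to reduce the statement to Theorem~\ref{theo:reg-not} by producing a single language $L\in\lfam(\revpwk)$ together with one homomorphism $h$ for which $h^{-1}(L)$ is an instance of the excluded family $L_I$. The observation that makes this painless is the choice $I=\{a,b\}^*$: then the alternative ``$v\in b^*$ or $v\in I$'' in the definition of $L_I$ imposes no restriction on $v$ at all, so
$$L_{\{a,b\}^*}=\{\, a^{m_1}bvba^{m_2}\mid m_1,m_2\ge 0,\ v\in\{a,b\}^*\,\}=\{\, w\in\{a,b\}^*\mid |w|_b\ge 2\,\},$$
the plain regular language of all words containing at least two $b$'s. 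By Theorem~\ref{theo:reg-not} this language is not accepted by any $\revpwk$.

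Next I would take the one-letter language $L=\{\, b^k\mid k\ge 2\,\}$ and the homomorphism $h\colon\{a,b\}^*\to\{b\}^*$ defined by $h(a)=\lambda$ and $h(b)=b$. Since $h(w)=b^{|w|_b}$ for every $w$, we get $h^{-1}(L)=\{\, w\in\{a,b\}^*\mid |w|_b\ge 2\,\}=L_{\{a,b\}^*}$, which we have just seen is outside $\lfam(\revpwk)$. So the only remaining task is to check that $L$ itself lies in $\lfam(\revpwk)$, which is immediate: the lower component idles on the right endmarker while the upper component moves right, inspects the first two input cells, and enters a halting accepting state exactly when both cells carry a $b$; the states used form a short acyclic chain, each entered by a unique transition, so the system is forward and backward deterministic. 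Alternatively, $L$ is the complement of the finite language $\{\lambda,b\}$, which is trivially in $\lfam(\revpwk)$, so $L\in\lfam(\revpwk)$ by Proposition~\ref{prop:closure-compl}. Combining the two facts gives $L\in\lfam(\revpwk)$ but $h^{-1}(L)=L_{\{a,b\}^*}\notin\lfam(\revpwk)$, which is the assertion.

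I do not anticipate a genuine obstacle: the whole argument rests on the single observation that specializing $L_I$ to $I=\{a,b\}^*$ turns it into the innocuous-looking language ``at least two $b$'s'', which is nevertheless not in $\lfam(\revpwk)$ by Theorem~\ref{theo:reg-not}, and that this language is the inverse-homomorphic image, under the $a$-erasing morphism, of a trivial unary language. The one place to be slightly careful is the identification $L_{\{a,b\}^*}=\{\,w\mid|w|_b\ge 2\,\}$: one strips off the maximal block of leading $a$'s and the maximal block of trailing $a$'s, and since at least two $b$'s remain, what is left begins and ends with $b$ and hence has the shape $bvb$. Note finally that the homomorphism used here is erasing, which is permitted for the operation ``inverse homomorphism'' (unlike the length-preserving homomorphisms of Proposition~\ref{prop:nonclosure-lenpreshom}); should a $\lambda$-free morphism be insisted upon, a different witness would be required, but the customary notion of inverse homomorphism makes the construction above sufficient.
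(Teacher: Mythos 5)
Your argument is correct. The paper states this proposition without a proof, so there is nothing to compare line by line, but what you do is precisely the intended instantiation of Theorem~\ref{theo:reg-not}: the choice $I=\{a,b\}^*$ is admissible ($I$ is regular with $I=I^R$, and the proof of Theorem~\ref{theo:reg-not} only ever uses words of the form $a^xb^ya^z$, which lie in $L_I$ for every $I$); the identification $L_{\{a,b\}^*}=\{\,w\in\{a,b\}^*\mid |w|_b\ge 2\,\}$ is valid under the paper's convention $m_1,m_2\ge 0$; and the witness $L=\{\,b^k\mid k\ge 2\,\}$ is accepted by a $\revpwk$ of the same trivial shape as the one exhibited for $L_1$ in the proof of Proposition~\ref{prop:nonclosure-union} (an acyclic chain of states in the upper component, idle lower component, no communication). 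Since the erasing morphism $h(a)=\lambda$, $h(b)=b$ pulls $L$ back to exactly this excluded regular language, the non-closure follows; your closing remark that inverse homomorphism conventionally permits erasing morphisms is also the right caveat to record.
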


\begin{table}[!ht]
\begin{center}
\renewcommand{\arraystretch}{1.2}\setlength{\tabcolsep}{6pt}
\begin{tabular}{|c|c|c|c|c|c|c|c|c|c|}
\hline
  Family & $\overline{\phantom{aa}}$ & $\cup$ & $\cap$ & $\cap_\subtext{reg}$ & $\cdot$ & $*$ &
  $h_{\text{len.pres.}}$ & $h^{-1}$ & $R$\\
\hline\hline
$\revpwk$   & \cyes & \cno & \cno & \cno & \cno & \cno & \cno & \cno & \cyes\\
\hline
\end{tabular}
\end{center}
\caption{Closure properties of the language families discussed.}
\label{tab:closure}
\end{table}

\section{Restricted Communication}\label{sec:restr-comm}

The $\revpwk$ considered in the previous sections may communicate arbitrarily often.
In this section, we want to consider $\dpwk$ and $\revpwk$ with a restricted amount
of communications. According to the definition in Section~\ref{sec:prelim} we have a function
$f : \mathbb{N} \to \mathbb{N}$ and define that a $\dpwk$ is communication bounded
by~$f$ if all words $w$ in the language are accepted with computations where the total
number of messages sent is bounded by~$f(|w|)$. A $\revpwk$ is communication bounded
by~$f$ if, in addition, the total number of messages sent in reverse computations is
bounded by~$f(|w|)$ as well. Here, we will study the language class with constant communication,
where $f \in O(1)$, the class with logarithmic communication, where $f \in O(\log(n))$,
the class with square root communication, where $f \in O(\sqrt{n})$, and the class with
arbitrary, i.e., linear communication, where $f \in O(n)$. The relations of these classes
have been investigated for $\dpwk$ in~\cite{kutrib:2010:tpwcc}. Here, we will complement the results
for $\revpwk$ and clarify the relations between reversible and
general, possibly irreversible, devices.
We start with an example presenting a non-semilinear language that is accepted by a
$\revpwkm{O(\log(n))}$.

\begin{example}\label{exa:a2n}
The language 
$L_{expo}=\{\,a^{2^{0}}b a^{2^{2}} b \cdots b a^{2^{2m}}c a^{2^{2m+1}}b\cdots ba^{2^{3}}ba^{2^{1}}
\mid m \ge 1\,\}$ 
is accepted by a $\revpwk$.
The rough idea of the construction is that in a first phase
the components compare the lengths ${2^{0}}$ with ${2^{1}}$,
${2^{2}}$ with ${2^{3}}$, \dots, and $2^{2m}$ with $2^{2m+1}$. 
The first phase ends when both components reach the center symbol $c$. 
In a second phase, the components compare the length $2^{2m}$ with $2^{2m-1}$,
$2^{2m-2}$ with $2^{2m-3}$, \dots, and ${2^{2}}$ with ${2^{1}}$.
To achieve this the lower component has to wait on the $c$ until
the upper component has moved across the block $a^{2^{2m+1}}$.
To realize the comparisons, the upper component moves across its $a$-blocks 
with half speed, whereas the lower component moves across its $a$-blocks with 
full speed, that is, one square per step.
The length comparisons in the first and second phase are checked
by communicating when a $b$, $c$, or the right endmarker is reached which must happen
synchronously. 

The length of an accepted input is $n=2^{2m+2}+2m$. There are communications
only on symbols $b$, $c$, and $\rightend$ both in forward computations and
reverse computations. Hence, there are at most $2m+3$ communications
in forward computations as well as in reverse computations. Thus,
the $\revpwk$ constructed is a $\revpwkm{O(\log(n))}$ and $L_{expo}$ belongs to $\lfam(\revpwkm{O(\log(n))})$.
\eoe
\end{example}

\begin{lemma}\label{lem:linear}
The language 
${L}_{lin}=\{\, wcw^R \mid w\in \{0,1\}^*\,\}$ 
belongs to 
$\lfam(\revpwkm{O(n)})$.
\end{lemma}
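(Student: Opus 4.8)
The plan is to build a $\revpwk$ that accepts $L_{lin}=\{\,wcw^R\mid w\in\{0,1\}^*\,\}$ by a straightforward symbol-by-symbol comparison, exploiting the fact that the two heads start at opposite ends of the input and move toward each other, so they simultaneously scan the positions $i$ and $n+1-i$. First I would have both components move their heads one square per step and, in every step, broadcast the symbol currently read; the upper component's transition accepts the pair $(x,y)$ only when $x=y$, and it records whether the single $c$ has been seen. Concretely the upper component $A_1$ passes from an ``initial'' mode, in which it has not yet seen $c$, checking at each step that the symbol it reads equals the symbol $A_2$ reads, into a ``done'' mode once it reads the marker $c$. Symmetrically $A_2$ tracks its own progress. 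The two heads meet in the middle exactly when the string has the form $ucv$ with $|u|=|v|$, and the step-by-step equality check guarantees $v=u^R$. Acceptance is declared when, say, the upper component reads the symbol $c$ while the lower component reads $c$ as well, or more robustly when both heads have crossed the center and each reaches the appropriate endmarker; one has to fix the exact synchronization detail (odd versus even length, whether $c$ is read by one head one step before the other) but this is routine.

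The key point is backward determinism. Since both heads move at full speed at every step and never stay stationary during the comparison phase, the forward transition table consists of triples $(p,x,m)\mapsto(p',+)$ where $m$ is the symbol broadcast by the other component; the reverse transitions are $(p',x,m)\mapsto(p,-)$. For this to be well defined as a (partial) function I must ensure that no two distinct forward transitions produce the same target state on the same scanned symbol and received message. This is easy to arrange because the state of each component can be taken to encode essentially only ``which phase am I in'' (before $c$, at $c$, after $c$, at the endmarker), and within a phase the head keeps advancing deterministically; the received message only gates whether the computation continues (it halts on a mismatch) but does not branch the state. So distinct configurations have distinct predecessors, and the reverse broadcast functions are just $\invmu_i(q,x)=x$ as in Example~\ref{exa:anbn}.

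The main obstacle — really the only subtlety — is the bookkeeping at the endmarkers and the center symbol: arranging that the two heads finish ``in phase,'' that exactly one component ends in an accepting state, and that the halting configuration has a unique predecessor under the reverse relation (in particular the very last step, where a head reads an endmarker, must be handled with a dedicated state, exactly as states $p_4,p_5,q_4,q_5$ are used in Example~\ref{exa:anbn}). Since every accepted word of length $n$ causes a message to be sent at every one of the roughly $n$ steps, both in the forward and in the reverse computation, the construction is trivially communication bounded by $f(n)=n\in O(n)$, so the resulting system is a $\revpwkm{O(n)}$ and $L_{lin}\in\lfam(\revpwkm{O(n)})$, which is what we had to show.
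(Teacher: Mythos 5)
Your construction is essentially the paper's own proof: both components advance synchronously at full speed, broadcast the scanned symbol in every step, compare the received symbol with the one read, and the uniform full-speed motion makes the system trivially backward deterministic with $O(n)$ messages. The one detail worth making explicit (as the paper does) is that after the two heads meet at $c$, one component must still scan the remaining input and halt non-accepting if a second $c$ occurs, since the pairwise equality check alone would otherwise also pass inputs such as $0c0c0$.
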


\begin{proof}
A $\revpwk$ accepting ${L}_{lin}$
will move its both components synchronously towards the middle marker~$c$
as long as the input symbol read and communicated in every step
is equal. In case of inequivalence the computation halts 
non-accepting. If both components reach the middle marker~$c$ at the
same time, the first task is nearly accomplished. It remains for
the lower component to read the input completely and to halt non-accepting in case of another
symbol~$c$ occurring. Since both components move synchronously and
communicate in every step, it is clear that ${L}_{lin}$ can be
accepted by a $\revpwkm{O(n)}$. 
\end{proof}

As a combination of Example~\ref{exa:a2n} and Lemma~\ref{lem:linear} we obtain the following lemma.

\begin{lemma}\label{lem:expolinear}
$
\hat{L}_{expo} = 
\{\,a^{2^{0}}x_1 a^{2^{2}} x_2 \cdots x_{m} a^{2^{2m}}c a^{2^{2m+1}}x_m \cdots x_2
 a^{2^{3}} x_1 a^{2^{1}}
\mid m \ge 1 \mbox{ and } x_i \in \{0,1\}, 1\leq i\leq m\,\}
$
belongs to 
$\lfam(\revpwkm{O(\log(n))})$.
\end{lemma}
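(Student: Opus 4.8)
The idea is to combine the two constructions essentially verbatim: the $\revpwk$ for $L_{expo}$ from Example~\ref{exa:a2n} already does all the arithmetic of comparing consecutive powers of two, and the $\revpwk$ for $L_{lin}$ from Lemma~\ref{lem:linear} already shows how to check a palindromic-style correspondence of single symbols around a center marker by synchronous moves. In $\hat{L}_{expo}$ the separators $x_1,\dots,x_m$ on the left must equal, in reversed order, the separators on the right, which is exactly the $wcw^R$ pattern of Lemma~\ref{lem:linear} if we read the $x_i$'s as the letters of $w$. So the construction for $L_{expo}$ should be augmented so that, whenever both components are poised on a separator symbol (a situation that already occurs synchronously in the $L_{expo}$ construction, since that is where the block-length comparisons are verified by communication), they additionally broadcast and compare the separator symbols they currently read, halting non-accepting on a mismatch.

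First I would recall the structure of the $L_{expo}$ machine: in phase one the upper component moves at half speed and the lower at full speed, and both reach the $i$-th separator (counting from the center) at the same time, verifying $|a^{2^{2i}}|$ against $|a^{2^{2i+1}}|$; in phase two the roles are re-synchronized (the lower component waits on $c$ while the upper traverses $a^{2^{2m+1}}$) and the remaining blocks $2^{2j}$ versus $2^{2j-1}$ are checked, again with the two heads simultaneously on matching separators. The key observation is that at every one of these synchronization points the upper head sits on some $x_k$ (reading left to right: $x_1$ first, then $x_2$, …) and the lower head sits on the corresponding $x_k$ on the right side (reading right to left: $x_1$ first, …). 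Hence the separators the two components see, in the order encountered, are $x_1,x_1$, then $x_2,x_2$, and so on in phase one, and symmetrically in phase two — exactly the pairing needed to enforce that the right-side separator block is the reversal of the left-side one. I would therefore enlarge the message alphabet to carry a symbol from $\{0,1\}$, have the broadcast function send the currently scanned separator symbol at each such step, and extend the transition functions so that the move proceeds exactly as before if and only if the received symbol equals the scanned symbol, and is undefined otherwise. The reverse transition and reverse broadcast functions are obtained the same way, mirroring the reverse functions already constructed for $L_{expo}$; adding the equality test does not destroy backward determinism because each added transition is undefined on mismatched pairs, so no two distinct predecessor configurations are created.

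Next I would check the communication bound. The only new communications happen on separator symbols, of which there are $2m$ in an accepted input, plus the $O(1)$-many communications on $b$, $c$, and $\rightend$ that were already present in the $L_{expo}$ construction; and the length of an accepted word of $\hat L_{expo}$ is still $n = 2^{2m+2} + 2m$, so $m \in O(\log n)$. Thus both the forward and the reverse computations use $O(\log n)$ messages, and the resulting system is a $\revpwkm{O(\log(n))}$.

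**Main obstacle.** The construction itself is routine once one believes that the two heads really are simultaneously on corresponding separators at each comparison step; the one thing that deserves care is exactly this synchronization bookkeeping — making sure that in both phases (and in particular across the re-synchronization on $c$, where the lower component waits out the long central block) the $k$-th separator visited by the upper component and the $k$-th separator visited by the lower component are genuinely $x_k$ and its mirror image, and that the half-speed/full-speed interplay does not shift the heads off a separator at the moment the equality message is exchanged. This is the same timing argument that underlies Example~\ref{exa:a2n}; I would just verify that superimposing the symbol-equality test respects it, and that no extra communication is needed beyond the points where $L_{expo}$ already communicates. Everything else — closing the transition and reverse-transition tables, confirming forward and backward determinism, and the length estimate — is mechanical.
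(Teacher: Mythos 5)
Your proposal is correct and follows essentially the same route as the paper: the paper's proof likewise observes that the $L_{expo}$ construction already has both components communicating synchronously on every separator and on $c$, so on inputs from $\hat{L}_{expo}$ these same communication points can carry the scanned separator symbol and thereby simulate the $L_{lin}$ check as a subtask, keeping the message count at $O(m)=O(\log n)$. Your additional remarks on preserving backward determinism and on the timing of the synchronization points are just a more detailed spelling-out of what the paper leaves implicit.
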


\begin{proof}
It can be observed from the construction in Example~\ref{exa:a2n} that in the
first phase both components communicate on every symbol $b$ and~$c$. So, on
the corresponding input from $\hat{L}_{expo}$ both
components can communicate on every symbol 
$0$, $1$, and $c$ in order to simulate the $\revpwk$ accepting ${L}_{lin}$ as a subtask.
\end{proof}

With similar ideas it is possible to show the following lemma.

\begin{lemma}\label{lem:quadlinear}
$
\hat{L}_{poly} = 
\{\,ax_1 a^{5} x_2 \cdots x_m a^{4m+1}c a^{4m+3}x_m \cdots x_2 a^{7} x_1 a^{3}
\mid \mbox{$m \ge 0$}
\mbox{ and } x_i \in \{0,1\}, 1\leq i\leq m\,\}
$
belongs to 
$\lfam(\revpwkm{O(\sqrt{n})})$.
\end{lemma}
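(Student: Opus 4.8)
The plan is to imitate the construction behind Example~\ref{exa:a2n} and Lemma~\ref{lem:expolinear}, replacing the geometrically growing $a$-blocks by arithmetically growing ones; it is exactly this change that turns the logarithmic communication bound into a square-root one. Write a well-formed input as $L_0 x_1 L_1 x_2 \cdots x_m L_m\, c\, R_m x_m \cdots x_2 R_1 x_1 R_0$ with $|L_k| = 4k+1$ and $|R_k| = 4k+3$. Summing the block lengths gives $|w| = \Theta(m^2)$, hence $m \in \Theta(\sqrt{|w|})$; so any $\dpwk$ (or $\revpwk$) accepting $\hat{L}_{poly}$ with only $O(m)$ messages --- in the forward as well as in the backward direction --- is a $\revpwkm{O(\sqrt{n})}$.

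First I would fix the movement of the two components. In a first phase the upper component runs left to right and the lower component right to left, each crossing its $a$-blocks at unit speed but pausing a fixed number of steps upon reading a separator symbol. Since $|R_k| - |L_k| = 2$ for every $k$, a single pause constant (together with a constant initial delay of the upper component) suffices to make, on a well-formed input, the upper component sit on $L_k$ exactly while the lower component sits on $R_k$, to make both reach the $k$-th separator pair in the same step, and to make both reach the centre marker $c$ simultaneously. A message is exchanged precisely when the upper component reads a separator, $c$, or an endmarker; if this exchange is not mutual --- one component broadcasting a separator while the other is inside an $a$-block --- the computation halts rejecting. Thus phase one enforces $|R_k| = |L_k| + 2$ for all $k$ (and, implicitly, equality of the block counts on the two sides). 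In a second phase, started once both components sit on $c$, the lower component waits while the upper component crosses the block $R_m$ adjacent to $c$; thereafter the upper component continues to the right across $R_{m-1}, R_{m-2}, \dots$ while the lower component leaves $c$ to the left across $L_m, L_{m-1}, \dots$, once more pausing a fixed number of steps per separator (legitimate because $|L_k| - |R_{k-1}| = 2$). This phase enforces $|L_k| = |R_{k-1}| + 2$ for $k = 1,\dots,m$, and together with a local check $|L_0| = 1$ performed right after the left endmarker it forces $|L_k| = 4k+1$ and $|R_k| = 4k+3$ exactly, so that the $a$-block shape is verified; the computation accepts when the upper component reaches the right endmarker in the same step in which the lower component reaches the left endmarker.

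The palindrome condition on the $\{0,1\}$-symbols comes for free, as in Lemma~\ref{lem:linear}: the synchronization above guarantees that the upper component reads the $k$-th left separator in the same step in which the lower component reads the $k$-th right separator, so each such message simply carries the scanned bit and a mismatch triggers rejection; this certifies that the $\{0,1\}$-part has the form $w\, c\, w^R$. All communication happens on the $2m$ separators, on $c$, and on the two endmarkers, so the device communicates $O(m) = O(\sqrt{|w|})$ times in the forward direction; the reverse computation merely plays the phases backwards, unwinding the pause counters and the turnaround at $c$, and communicates equally often. Backward determinism is arranged exactly as in Examples~\ref{exa:anbn} and~\ref{exa:a2n}: each component's state records the current phase and the residue of its position within the current block, so every forward step has a unique predecessor. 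The cases $m = 0$ and $m = 1$ are treated directly.

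The main obstacle is the synchronization bookkeeping: one must pin down the pause constants, the initial delay, the speed and the waiting behaviour around $c$, and the precise step at which each message is sent so that (i) on every well-formed input all corresponding markers are reached simultaneously and the input is accepted, (ii) any deviation in a block length, in the number of blocks on a side, or in the coarse shape of the input breaks this synchronization and is detected, and (iii) every transition, forward and reverse, stays deterministic. Conceptually this is just the ``delay and compare'' technique already used for $\hat{L}_{expo}$, but making all the additive constants fit together simultaneously is where the care is needed.
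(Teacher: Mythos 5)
Your construction is correct and is essentially the intended one: the paper gives no proof of this lemma beyond the remark that it follows ``with similar ideas'' from Example~\ref{exa:a2n} and Lemma~\ref{lem:expolinear}, i.e., the same delay-and-compare technique with the half-speed traversal replaced by constant pauses (consecutive blocks now differ by the additive constant $2$ rather than by a factor of $2$), giving $O(m)=O(\sqrt{n})$ communications since $n=\Theta(m^2)$. Your chain of constraints $|L_0|=1$, $|R_k|=|L_k|+2$ (phase one) and $|L_k|=|R_{k-1}|+2$ (phase two) does pin down $|L_k|=4k+1$ and $|R_k|=4k+3$, and the synchronization and reversibility bookkeeping you defer is no more than what the paper itself defers in Example~\ref{exa:a2n}.
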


It is shown in~\cite{kutrib:2010:tpwcc} that ${L}_{lin}$ does
not belong to $\lfam(\dpwkm{O(f)})$ if $f \in \frac{n}{\omega(\log(n))}$. Hence,
${L}_{lin}$ does not belong to $\lfam(\revpwkm{O(\sqrt{n})})$.
It is also shown in~\cite{kutrib:2010:tpwcc} that $\hat{L}_{poly}$ does not
belong to $\lfam(\dpwkm{O(f)})$ if $f \in O(\log(n))$. Thus, $\hat{L}_{poly}$ does not
belong to $\lfam(\revpwkm{O(\log(n))})$. Finally, it is known due to~\cite{kutrib:2010:tpwcc}
that every language in $\lfam(\dpwkm{O(1)})$ is semilinear. Since $\hat{L}_{expo}$ is not
semilinear, it does not belong to $\lfam(\revpwkm{O(1)})$.
Together with Lemma~\ref{lem:linear}, Lemma~\ref{lem:expolinear}, and Lemma~\ref{lem:quadlinear} 
we obtain the following proper hierarchy:
\begin{multline*}
\lfam(\revpwkm{O(1)}) \subset \lfam(\revpwkm{O(\log(n))})
 \subset\\
\lfam(\revpwkm{O(\sqrt{n})}) \subset \lfam(\revpwkm{O(n)})
\end{multline*}

Theorem~\ref{theo:reg-not} presents a regular language that is not accepted by any
$\revpwkm{O(n)}$. Since the regular languages belong to $\lfam(\dpwkm{O(1)})$ we
immediately obtain proper inclusions between reversible and general language classes
with the same amount of communication. These results and the other results of this
section are summarized in Figure~\ref{fig:pic-incl}.

\begin{figure}[!ht]
  \centering
  \includegraphics[scale=.7]{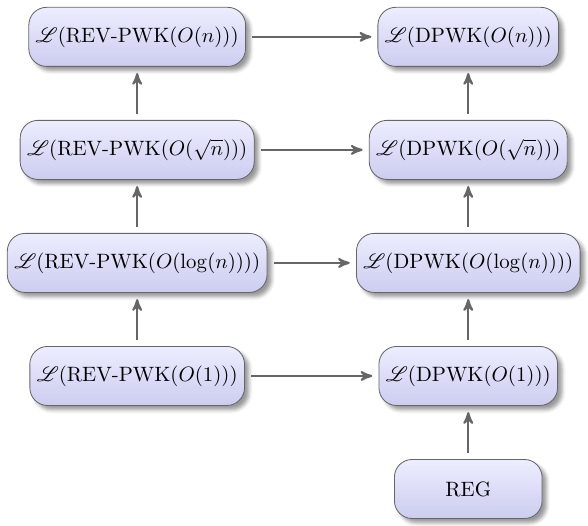} 
  \caption{Relationships between language families induced by
two-party Watson-Crick systems. An arrow between families indicates a strict
inclusion.
}\label{fig:pic-incl}
\end{figure}

\section{Decidability Questions}\label{sec:deci}

In this section, we will discuss several decidability
questions for $\revpwk$. It has been shown in~\cite{kutrib:2010:tpwcc}
that the questions of emptiness, finiteness, inclusion, and
equivalence are decidable for general, possibly irreversible,
$\dpwk$ in case of a finite number of communications. 
This result leads immediately to the following decidability
results for $\revpwk$ in case of a finite number of communications.

\begin{theorem}\label{theo:dec}
Let $k\geq 0$ be a constant. Then 
emptiness, finiteness, inclusion, and equivalence
are decidable for $\revpwkm{k}$.
\end{theorem}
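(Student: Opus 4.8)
The plan is to reduce the decidability questions for $\revpwkm{k}$ to the corresponding questions for general $\dpwkm{k}$, which are known to be decidable by~\cite{kutrib:2010:tpwcc}. The key observation is that a $\revpwkm{k}$ is in particular a $\dpwk$ with at most $k$ communications, so syntactically it already \emph{is} a $\dpwkm{k}$. Since the language accepted by a device depends only on its forward computation (reversibility is an extra structural property, not a change to the acceptance semantics), the language of a $\revpwkm{k}$ coincides with its language viewed as a $\dpwkm{k}$. Hence any algorithm deciding emptiness, finiteness, inclusion, or equivalence for $\dpwkm{k}$ applies verbatim to $\revpwkm{k}$ when we simply ignore the reverse transition and broadcast functions.

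Concretely, first I would observe that from a given $\revpwkm{k}$ $\mathcal{A} = \langle \Sigma, M, \leftend, \rightend, A_1, A_2\rangle$ together with its reverse functions, one obtains by projection a $\dpwkm{k}$ by forgetting $\invdelta_i$ and $\invmu_i$; this is effective. Second, $L(\mathcal{A})$ as a $\revpwkm{k}$ equals $L(\mathcal{A})$ as a $\dpwkm{k}$, since acceptance is defined identically through $\vdash^*$ and the bound on forward messages is the same (the reverse-message bound only restricts which devices qualify as $\revpwkm{k}$, it does not alter the accepted language). Third, I would invoke the decidability results for $\dpwkm{k}$ from~\cite{kutrib:2010:tpwcc}: given the projected devices, run the corresponding decision procedure for emptiness, finiteness, inclusion, or equivalence, and return its answer.

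There is essentially no obstacle here; the statement is a routine corollary of the earlier work together with the trivial fact that every $\revpwkm{k}$ is, after forgetting reversibility, a $\dpwkm{k}$ accepting the same language. The only point worth a sentence of care is that the communication bound $k$ for a $\revpwk$ constrains both the forward and the reverse message counts, so a $\revpwkm{k}$ projects to a $\dpwkm{k}$ (not to a $\dpwk$ with a larger bound), which is exactly the class for which~\cite{kutrib:2010:tpwcc} supplies the decision algorithms. I would therefore conclude that emptiness, finiteness, inclusion, and equivalence are decidable for $\revpwkm{k}$ for every constant $k \geq 0$.
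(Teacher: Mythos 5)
Your proposal is correct and follows exactly the paper's reasoning: the paper also derives Theorem~\ref{theo:dec} immediately from the decidability of emptiness, finiteness, inclusion, and equivalence for general $\dpwkm{k}$ established in~\cite{kutrib:2010:tpwcc}, since every $\revpwkm{k}$ is in particular a $\dpwkm{k}$ accepting the same language. Your additional remark that the communication bound projects correctly (forward messages are bounded by $k$, so the device is indeed a $\dpwkm{k}$ and not merely a $\dpwk$ with some larger bound) is a sensible point of care that the paper leaves implicit.
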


Next, we want to obtain that the decidability questions become
undecidable if a non-constant number of communications is
used. In a first step, we show that the questions of
emptiness, finiteness, inclusion, and equivalence are 
undecidable and, moreover, not even
semidecidable for $\revpwk$ in case of a linear number of
communications used. In a second step, we will obtain the same
non-semidecidability results with a superlogarithmic number
of communications used.

It has been shown in~\cite{kutrib:2017:owrmhfa} that the questions of
testing emptiness, finiteness, inclusion, and equivalence
are not semidecidable for
\emph{reversible two-head finite automata}. The difference between
such automata and $\dpwk$ is that the former move their two heads
in the same direction from left to right, whereas the latter move
both heads in opposite directions. Now, the idea is to simulate
a reversible two-head finite automaton by a $\revpwk$.

The non-semidecidability results for reversible two-head finite
automata are obtained by showing that the set $\valc_M$ of 
suitably encoded valid computations of a deterministic linearly
space bounded one-tape, one-head Turing machine~$M$, so-called
linear bounded automaton ($\lba$) can be accepted by a 
reversible two-head finite automaton. It should be noted that
the due to technical reasons the definition of the set 
$\valc_M$ in~\cite{kutrib:2017:owrmhfa}
considers valid computations on inputs of length at least~2.

Now, we will construct a $\revpwkm{O(n)}$ that accepts
the set $\valc'_M = \{\, w^R c w \mid w \in \valc_M \,\}$, 
where the set $\valc_M$ is defined over some alphabet $A$ and
$c \not\in A$ is a new symbol.

\begin{lemma}\label{lem:valc'}
Let $M$ be an $\lba$. Then, a $\revpwkm{O(n)}$ accepting $\valc'_M$
can effectively be constructed.
\end{lemma}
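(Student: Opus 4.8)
## Proof proposal

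The plan is to reduce to the known construction for reversible two-head finite automata from~\cite{kutrib:2017:owrmhfa}. By that result, the language $\valc_M$ is accepted by a reversible two-head finite automaton, i.e.\ a device with two heads both moving left-to-right on a common input $w$, which is forward and backward deterministic. The goal here is to turn such an automaton into a $\revpwk$ operating on $w^R c w$, where the extra structure of the doubled, mirrored input is exactly what lets a two-party system — whose two heads move towards each other — emulate two heads that move the same way.

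First I would describe the head geometry. On input $u = w^R c w$ of length $2|w|+1$, the $\revpwk$ is set up so that the \emph{lower} component, starting at the right end, reads the suffix $w$ from right to left — which is the same symbol sequence that the \emph{second} head of the two-head automaton reads on $w$ when scanned left-to-right; and the \emph{upper} component, after first running across the prefix $w^R$ and reflecting at the central marker $c$, reads the copy of $w$ embedded in the suffix, tracking the \emph{first} head of the two-head automaton. The subtlety is timing: the upper component must already be positioned on the right copy of $w$ before the lower component has advanced too far. I would handle this in an initial phase where the upper component races across $w^R$ at full speed while the lower component waits (or moves slowly) on the endmarker / the symbols of $w$, and both synchronize on reaching $c$ (resp.\ the first symbol past the other's starting region) by communicating at that moment. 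Once both heads sit on the shared copy of $w$ moving towards the middle, each step of the two-head automaton — which updates state based on the two currently scanned symbols — is simulated by one step of the $\revpwk$, with the two components exchanging in every step the symbol each one reads, so that each has the pair of symbols needed to apply the two-head transition. Acceptance is inherited: accept iff the simulated two-head automaton accepts.

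Second I would verify reversibility. Because both components communicate at every step of the simulation phase, I may (as in the proof of Theorem~\ref{theo:reg-not}) think of the simulation phase as driven by a single common finite control receiving a pair of symbols; forward and backward determinism of this control follow directly from forward and backward determinism of the underlying two-head automaton. The only non-simulation parts are the short synchronization phases at the ends (the upper component sweeping $w^R$, the waiting, the crossing of $c$, and the lower component's final sweep to check no further $c$ occurs and to confirm the ``$w^R c w$'' shape), and these are straightforward deterministic one-directional passes which are trivially reversible — the upper component's sweep over $w^R$ runs backwards by moving right, the lower component's final sweep runs backwards by moving left, and the reverse broadcast functions just re-emit the same symbols. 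Since $\valc_M$ is a finite-communication or at most linear-communication language in the two-head model and the simulation adds only $O(|w|) = O(n)$ communications (one per step plus a constant at the phase boundaries), the resulting device is a $\revpwkm{O(n)}$, and $L(\mathcal{A}) = \valc'_M$.

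The main obstacle I expect is the synchronization bookkeeping: ensuring the upper component lands on exactly the right symbol of the suffix copy of $w$ at exactly the step where the lower component is on the corresponding symbol, so that the two $\revpwk$ heads scan the same position pair that the two-head automaton scans — and doing this while preserving backward determinism at the phase transitions, so that the predecessor of the first ``simulation'' configuration (and of the configuration where the upper head reflects at $c$) is uniquely recoverable. This is where the doubled-and-mirrored form $w^R c w$ earns its keep, and where the precise definition of $\valc_M$ from~\cite{kutrib:2017:owrmhfa} (in particular the convention restricting to inputs of length at least~2) must be threaded through so that the padding and endmarker handling cause no ambiguity.
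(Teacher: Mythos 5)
Your high-level plan (reduce to the reversible two-head construction of~\cite{kutrib:2017:owrmhfa} and use the doubled input $w^Rcw$ so that two opposite-direction heads can emulate two left-to-right heads) is the right one, but the way you exploit the doubling is geometrically wrong, and this breaks the core of the argument. You place \emph{both} components in the suffix copy of $w$: the upper component ``reflects'' at $c$ and enters the suffix from the left, while the lower component enters it from the right. But the lower component, reading the suffix $w$ from right to left, sees the symbol sequence $w^R$, \emph{not} the sequence that head~2 of the two-head automaton sees when it scans $w$ left to right --- your parenthetical claim to the contrary is false. Worse, advancing head~2 (which only moves rightward on $w$) would require the lower component to move \emph{right}, which it cannot do, and the two components converging inside one copy would have to cross. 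The correct use of the mirrored input is the opposite of yours: after both components meet at $c$, they move \emph{outward} from the center, the upper through the suffix $w$ and the lower through the prefix $w^R$; since the prefix is the reversal of $w$, the lower component traversing it leftward reads $w_1,w_2,\dots$ in the same order as the upper component does, so each component independently plays one left-to-right head on its own copy of $w$, with $c$ serving as the simulated left endmarker. No timing or crossing problem arises because both start at $c$ simultaneously.

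A second gap: you delegate the verification of the shape $w^Rcw$ to ``the lower component's final sweep,'' but a single finite-state component cannot check that the prefix is the reversal of the suffix (this is essentially the non-regular language of Lemma~\ref{lem:linear}). The intended construction does this in a first phase in which both components move synchronously inward, exchanging the scanned symbol at every step and comparing; this phase simultaneously certifies the structure and delivers both components to $c$ at the same time step, which is exactly the starting configuration needed for the outward simulation phase (the lower component additionally rejects if a second $c$ ever appears). With that first phase in place, your reversibility and $O(n)$-communication arguments go through essentially as you state them.
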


\begin{proof}
Let $M$ be an $\lba$. A $\revpwk$~$M'$ accepting $\valc'_M$ has
to accomplish two tasks. 
First, $M'$ will test the structure $w^Rcw$ disregarding whether
$w$ belongs to $\valc_M$ or not. To achieve this task 
we use a similar approach as described in the proof of Lemma~\ref{lem:linear}.
Both 
components will move synchronously towards the middle marker~$c$
as long as the input symbol read and communicated in every step
is equal. The structure $w^Rcw$ is correctly tested, if both components 
reach the middle marker~$c$ at the same time. Then, the first task is 
nearly accomplished, but it remains for the lower component, while 
accomplishing the second task, to read the input completely and to 
halt non-accepting in case of another symbol~$c$ occurring. 
Since both components move synchronously and
communicate in every step, it is clear that the first task can be
realized by a $\revpwkm{O(n)}$. 

For the second task, we first observe that the remaining input
for both components is the same word~$w$ and it remains to be 
checked whether or not~$w$ belongs to $\valc_M$. This can now be
realized by implementing the construction given in~\cite{kutrib:2017:owrmhfa}
for two-head finite automata. The head~1 is simulated by the upper
component and head~2 is simulated by the lower component, whereby
the middle marker~$c$ is interpreted as the left endmarker for the
two-head finite automaton. In this construction the lower
component reads the input completely and can halt non-accepting 
if another symbol~$c$ is read. Since the two-head finite automaton
is reversible, the second task and, therefore, the complete 
construction can be realized by a $\revpwkm{O(n)}$. 
\end{proof}

This leads immediately to the following non-semidecidability results.

\begin{theorem}\label{theo:undec-base}
\quad 
The problems of testing
emptiness, finiteness, inclusion, and equivalence are not
semidecidable for a given $\revpwkm{O(n)}$.
\end{theorem}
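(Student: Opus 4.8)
The plan is to derive Theorem~\ref{theo:undec-base} as an immediate consequence of Lemma~\ref{lem:valc'} together with the standard encoding argument already available from~\cite{kutrib:2017:owrmhfa}. The key point is that by Lemma~\ref{lem:valc'}, for every $\lba$ $M$ we can effectively construct a $\revpwkm{O(n)}$ $M'$ with $L(M') = \valc'_M = \{\, w^R c w \mid w \in \valc_M \,\}$. Since $\valc'_M$ is empty if and only if $\valc_M$ is empty, and $\valc_M$ is empty if and only if $L(M) = \emptyset$, emptiness of the constructed $\revpwkm{O(n)}$ is equivalent to emptiness of $L(M)$. As emptiness for linear bounded automata is not semidecidable (it is $\Pi_1$-complete, being the complement of the semidecidable nonemptiness problem), it follows that emptiness for $\revpwkm{O(n)}$ is not semidecidable either.

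For the remaining three problems I would reduce from emptiness. Finiteness: one checks that $\valc_M$ is finite if and only if $L(M)$ is finite, and using a routine padding of valid computations (e.g.\ allowing arbitrarily long stationary tails in the encoded computation, which keeps the language reversible and the communication linear) one arranges that $\valc'_M$ is infinite exactly when $L(M) \neq \emptyset$; thus nonemptiness of $L(M)$ reduces to infiniteness of $L(M')$, so finiteness of $\revpwkm{O(n)}$ is not semidecidable. Inclusion and equivalence: since $\revpwkm{O(n)}$ trivially contains a system for $\emptyset$, the instance ``$L(M') \subseteq \emptyset$'' (equivalently ``$L(M') = \emptyset$'') reduces emptiness to both inclusion and equivalence, and hence these are not semidecidable either.

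The main obstacle — and the reason the preceding lemma does the real work — is ensuring that the simulation of the reversible two-head finite automaton accepting $\valc_M$ can be carried out by a device that is both reversible and moves its heads in opposite directions, which is exactly what Lemma~\ref{lem:valc'} establishes via the $w^R c w$ trick (the middle marker $c$ serves as the shared left endmarker for the two-head automaton, so after the structural test both components face the same word $w$ and can run the original construction head-for-head). Given that lemma, the present theorem is essentially bookkeeping: one only has to observe that the reductions above are effective and that the relevant properties of $\valc_M$ mirror those of $L(M)$, and then invoke the non-semidecidability of the corresponding $\lba$ problems.

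\begin{proof}
By Lemma~\ref{lem:valc'}, for every $\lba$ $M$ one can effectively construct a $\revpwkm{O(n)}$ $M'$ with $L(M')=\valc'_M$. Since $\valc'_M=\emptyset$ iff $\valc_M=\emptyset$ iff $L(M)=\emptyset$, the emptiness problem for $\revpwkm{O(n)}$ is at least as hard as the emptiness problem for $\lba$s, which is not semidecidable; hence emptiness for $\revpwkm{O(n)}$ is not semidecidable. By a standard padding of the encoded computations (admitting arbitrarily many stationary steps at the end of a valid computation, which preserves reversibility and keeps the communication linear), one obtains from $M$ a $\revpwkm{O(n)}$ whose language is infinite iff $L(M)\neq\emptyset$; thus finiteness for $\revpwkm{O(n)}$ is not semidecidable. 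Finally, a $\revpwkm{O(n)}$ accepting the empty language is trivially available, and $L(M')=\emptyset$ iff $L(M')\subseteq\emptyset$ iff $L(M')$ equals that fixed empty-language system; so emptiness reduces to inclusion and to equivalence, and neither is semidecidable.
\end{proof}
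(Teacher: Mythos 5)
Your proposal follows essentially the same route as the paper: invoke Lemma~\ref{lem:valc'}, reduce emptiness of $\lba$s to emptiness of the constructed $\revpwkm{O(n)}$, and handle inclusion and equivalence via a trivially constructible system for $\emptyset$. Two small points deserve attention. First, your biconditional ``$\valc_M=\emptyset$ iff $L(M)=\emptyset$'' is not quite accurate: by the convention from~\cite{kutrib:2017:owrmhfa}, $\valc_M$ only encodes computations on inputs of length at least two, so $\valc'_M$ can be empty while $M$ accepts some word from $\{\lambda\}\cup\Sigma$. The paper patches this by observing that the word problem for $\lba$s is decidable, so acceptance of the finitely many short inputs can be checked separately and emptiness of $\lba$s restricted to inputs of length at least two remains non-semidecidable; without this remark your reduction is, strictly speaking, incorrect. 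Second, for finiteness you first state the correct and sufficient equivalence ($\valc'_M$ finite iff $L(M)$ finite, so non-semidecidability of $\lba$ finiteness transfers directly -- this is exactly what the paper does), but then bolt on a padding construction to reduce from emptiness instead; that padding is an additional, unverified construction whose reversibility and communication bound you merely assert, and it is unnecessary given the first observation. With the length-two caveat added and the padding dropped, your argument coincides with the paper's.
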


\begin{proof}
Let $M$ be an $\lba$ accepting inputs over the alphabet $\Sigma$. 
According to Lemma~\ref{lem:valc'} we can effectively construct
a $\revpwkm{O(n)}$ $M'$ accepting $\valc'_M$. 
Clearly, $L(M')=\valc'_M$ is empty if and only if
$\valc_M$ is empty if and only if
$L(M)$ is either empty or contains some
words from the finite set $\{\lambda\} \cup \Sigma$.
The latter words have to be considered, since $M$ may accept
words of length less than two.
Since the word problem is decidable for $\lba$s and 
emptiness is not semidecidable for $\lba$s, the 
non-semidecidability of emptiness follows.  

We also obtain that $L(M')=\valc'_M$ is finite if and only if
$\valc_M$ is finite if and only if $L(M)$ is finite.
Since finiteness is not semidecidable for $\lba$s, the 
non-semidecidability of finiteness follows.

Finally, it is easy to effectively construct a $\revpwkm{1}$ 
that accepts nothing. Hence, the non-semidecidability
of equivalence and inclusion follows immediately.
\end{proof}

Our next step is to obtain these non-semidecidability results also
for $\revpwk$ with less communication. Our approach is 
to define another variant of $\valc'_M$ in which
the length of each configuration is enlarged while 
the same amount of communication is being kept. 
A similar approach has been used in~\cite{kutrib:2010:tpwcc} for general,
possibly irreversible, $\dpwk$. However, here the details 
are quite different and more complicated since the construction has
to be reversible. The detailed and lengthy construction is omitted here. 
With all these prerequisites it is possible to show the following theorem.

\begin{theorem}\label{theo:undec-log}
\quad 
The problems of testing
emptiness, finiteness, inclusion, and equivalence
are not semidecidable for a given 
$\revpwkm{O(\log(n)\cdot \log(\log(n)))}$.
\end{theorem}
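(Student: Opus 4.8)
\textbf{Proof plan for Theorem~\ref{theo:undec-log}.}
The plan is to mimic the proof of Theorem~\ref{theo:undec-base}, but to replace $\valc'_M$ by a ``padded'' variant $\valc''_M$ in which every configuration block of the $\lba$ computation is stretched out so that the machine can afford to test it with much less communication. Concretely, I would fix an $\lba$ $M$ and encode a valid computation of $M$ on an input of length~$\ell$ as before, but insert between successive symbols of each configuration a long block of a fresh padding symbol, where the length of the padding is chosen to depend (at least doubly-exponentially slowly) on the position in the computation. The target is that the length $n$ of a padded valid-computation word grows fast enough relative to the number of configuration boundaries that the number of communications needed — which will only occur at the (relatively few) ``real'' symbol positions and at block boundaries — is $O(\log(n)\cdot\log(\log(n)))$. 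The reversible $\dpwk$ then (i) tests the mirror structure $w^R c w$ as in Lemma~\ref{lem:valc'}, communicating only sparsely thanks to the padding, and (ii) on the word $w$ simulates the reversible two-head automaton of~\cite{kutrib:2017:owrmhfa} that checks membership in (the padded version of) $\valc_M$, again communicating only at the sparse real positions.

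The key steps, in order, would be: first, make precise the padding scheme and the definition of $\valc''_M$, and verify that a word in $\valc''_M$ of length $n$ has its configuration-boundary count and real-symbol count bounded by $O(\log(n)\cdot\log(\log(n)))$ while still allowing the padded configurations to be compared cell-by-cell by heads moving at suitably matched speeds. Second, show that the structural test $w^R c w$ can be carried out by a reversible $\dpwk$ using communication only at block boundaries — the two components move towards $c$ and need only exchange a message when they simultaneously leave one padding block and enter the next, which requires the padding lengths to be laid out symmetrically around $c$ and to be locally self-delimiting so that each component can count its way across a block on its own. Third, re-implement the reversible two-head automaton construction of~\cite{kutrib:2017:owrmhfa} on the padded word, arranging that the heads only consult each other at real symbol positions or block boundaries, and checking carefully that the padding does not destroy reversibility: every backward step must still have a unique predecessor, which is the delicate part because counting across a padding block in reverse must be forced, not guessed. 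Fourth, conclude exactly as in Theorem~\ref{theo:undec-base}: emptiness of the constructed $\revpwkm{O(\log(n)\cdot\log(\log(n)))}$ reduces to emptiness of $L(M)$ modulo the finitely many short inputs, finiteness reduces to finiteness of $L(M)$, and the non-semidecidability of emptiness and finiteness for $\lba$s together with the existence of a trivial $\revpwkm{1}$ accepting $\emptyset$ yields the four non-semidecidability claims.

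The main obstacle I expect is reconciling three competing demands simultaneously: the padding must be long enough to drive the communication count down to $O(\log(n)\cdot\log(\log(n)))$, short enough (and regularly enough structured) that the two opposite-moving components can still synchronize at block boundaries and perform the cell-by-cell configuration comparison with matched head speeds, and ``rigid'' enough that both the forward and the reverse transition functions remain deterministic. In particular, getting reversibility right is the subtle point — in the unpadded case reversibility was inherited directly from the two-head automaton of~\cite{kutrib:2017:owrmhfa}, but once padding blocks are inserted the components must traverse them by internal counting, and a naive counter is not backward deterministic unless the block structure unambiguously marks where a count begins and ends in both directions. This is precisely why the paper states that ``the details are quite different and more complicated since the construction has to be reversible,'' and it is the step that would occupy the bulk of a full write-up; the remaining bookkeeping (length estimates, the reduction from $\lba$ emptiness/finiteness) is routine given the analogous argument in~\cite{kutrib:2010:tpwcc} and the proof of Theorem~\ref{theo:undec-base}.
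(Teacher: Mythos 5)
Your plan coincides with the paper's own (largely omitted) argument: the paper likewise defines a padded variant of $\valc'_M$ in which configurations are lengthened while the number of communications is preserved, adapts the padding technique of~\cite{kutrib:2010:tpwcc} to the reversible setting, identifies the reversibility of the padded traversal as the delicate point, and then concludes exactly as in Theorem~\ref{theo:undec-base}. Note only that, like the paper (which explicitly omits ``the detailed and lengthy construction''), your proposal defers the one genuinely hard step --- a backward-deterministic traversal and comparison of the padding blocks --- so it is a correct proof \emph{plan} rather than a complete proof.
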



\end{document}